\definecolor{mygray}{gray}{0.4}
\definecolor{light-blue}{rgb}{0.8,0.85,1}
\newcommand{\eref}[1]{eq.~(\ref{#1})} 
\mathchardef\Re="023C
\mathchardef\Im="023D
\newcommand{\prlsection}[1]{\textit{#1.}---}
\newcommand{\mcI}{\mathcal{I}}
\newcommand{\mcB}{\mathcal{B}}
\newcommand{\mcH}{{\sf H}}
\newcommand{\mcM}{\mathcal{M}}
\newcommand{\mcN}{\mathcal{N}}
\newcommand{\mcY}{\mathcal{Y}}
\newcommand{\mcX}{\mathcal{X}}
\newcommand{\mcZ}{\mathcal{Z}}
\newcommand{\mcE}{\mathcal{E}}
\newcommand{\mcL}{\mathcal{L}}
\newcommand{\Jami}{Choi-Jamio\l{}kowski}
\newcommand{\mcF}{\mathcal{F}}
\newcommand{\mcT}{\mathcal{T}}
\newcommand{\mcD}{\mathcal{D}}
\newcommand{\diag}{\text{diag}}
\newcommand{\ie}{i.e.}
\newcommand{\rmi}{\mathrm{i}}
\newcommand{\fref}[1]{fig.~\ref{#1}}
\newcommand{\tr}{\mathop{\mathrm{tr}}\nolimits}
\newcommand{\id}{\mcI}
\newcommand{\ket}[1]{{\vert #1 \rangle}}
\newcommand{\bra}[1]{{\langle #1 \vert}}
\newcommand{\proj}[2]{{\vert #1 \rangle \langle #2 \vert}}
\newtheorem{theorem}{Theorem}
\newtheorem{proposition}{Proposition}
\newtheorem*{theorem6}{Theorem 6}
\newtheorem*{theorem3}{Theorem 3}
\newtheorem*{theorem1}{Theorem 1}
\newtheorem*{proposition1}{Proposition 1}
\newcommand{\Ind}{\ensuremath{\overline{{\sf C}^{\rm div}}}}
\newcommand{\cptp}{{\ensuremath{\sf C}}}
\newcommand{\cptpd}[1]{\ensuremath{{\sf C}_#1}}
\newcommand{\cpDiv}{\ensuremath{{\sf C}^\text{CP}}}
\newcommand{\LDiv}{\ensuremath{{\sf C}^\text{L}}}
\newcommand{\one}{\openone}
\newcommand{\unam}{Universidad Nacional Aut\'onoma de M\'exico, Ciudad de M\'exico 01000, Mexico}
\newcommand{\ifunam}{Instituto de F\'{\i}sica, \unam}
\begin{document}
\title{Quantum dynamics is not strictly bidivisible \\[2ex]       
{\normalfont \normalsize  \textcolor{black}{``Dedicado a la memoria de Juan Manuel Dávalos Ramírez''}} }
\author{David Davalos} \email[]{davidphysdavalos@gmail.com}\affiliation{\sas}\affiliation{\ifunam}
\author{Mario Ziman}  \email[]{mario.ziman@savba.sk}\affiliation{\sas}
\newcommand{\sas}{Institute of Physics, Slovak Academy of Sciences, D\'ubravsk\'a cesta 9, Bratislava 84511, Slovakia}
\email{davidphysdavalos@gmail.com}
\begin{abstract} 
We address the question of the existence of quantum channels that are divisible in two quantum channels but not in three or, more generally, channels divisible in $n$ but not in $n+1$ parts. We show that for the qubit those channels \textit{do not} exist, whereas for general finite-dimensional quantum channels the same holds at least for full Kraus rank channels. To prove these results, we introduce a novel decomposition of quantum channels which separates them into a boundary and Markovian part, and it holds for any finite dimension. Additionally, the introduced decomposition amounts to the well-known connection between divisibility classes and implementation types of quantum dynamical maps, and can be used to implement quantum channels using smaller quantum registers.
%
\end{abstract} 
\pacs{03.65.Yz, 03.65.Ta, 05.45.Mt}
\maketitle
\prlsection{Introduction}
Quantum channels are one of the basic building blocks of quantum physics describing fixed time transformations of quantum systems~\cite{cirac,zimansbook,rivasreview}. They are used to describe memoryless noise in quantum communication~\cite{wildebook} or the decoherence processes during quantum computation. The famous Stinespring dilation theorem~\cite{Stinespring2006} guarantees that any quantum channel $\mcE$ can be represented by a system-environment Hamiltonian $H$, a fixed time $t_\text{fixed}$ and a suitable initial state of the environment $\varrho_\text{E}$. In particular, the Schr\"odinger equation defines the following transformation of the quantum system
\begin{equation}
\mcE[\varrho]=\tr_\text{E} \left[ U(t_\text{fixed}) \left(\varrho \otimes \varrho_\text{E} \right) U^\dagger(t_\text{fixed})\right]\,,
\label{eq:open_system}
\end{equation}
where $U(t)=e^{-\rmi t H}$ (taking $\hbar=1$) and $\tr_\text{E}$ is the partial trace over the environment. Such an open system model of quantum processes suggests that the induced quantum channel can be understood as a composition of shorter (both in time and induced changes) state transformations. However, as was discovered in the seminal work by Wolf and Cirac~\cite{cirac}, there exist quantum channels that cannot be written as a concatenation of other channels thus; they are indivisible. This is similar to the prime numbers; they cannot be factorized. In this Letter, we investigate this analogy in more detail and show its powerful applications to structural problems of quantum channels.

We are interested to see how a given channel can be factorized
into indivisible ones. In particular, our aim is to characterize
the families of $n$-divisible quantum channels, i.e., the channels that
are concatenations of at most $n$ quantum channels.
As we will see, there are several key differences between
divisibility and factorization. First, the special role of unity
is played by the class of unitary channels and the appearance
of unitary channels in the decomposition is considered as trivial and does
not count as being an indivisible channel. Second, the concatenation
is not unique; thus, there are different ways the channel can be expressed
as composition of nontrivial channels. Third, there exist infinitely
divisible channels that can be expressed as a concatenations of infinitely
many quantum channels. Their existence follows from the solution of the Gorini-Kossakowski-Sudarshan-Lindblad equation~\cite{lindblad,Gorini1976,kossa,kossa2}, also known as the time-dependent Markovian master equation (see~\eref{eq:canonicalform}) describing the Markovian time evolution of open quantum systems. For example, in the case of time-independent master equation, the solution reads in the form $\mcE_t=e^{\mcL t}$, where $\mcL$ is the time-independent Lindblad generator. If a channel can be expressed in this form, we call it Markovian. The exponential form
implies $\mcE_t$ can be expressed as the $n$th power of quantum channel
$\mcF_n=e^{\mcL t/n}$, i.e., $\mcE_t=(\mcF_n)^n$ for arbitrary integer $n$. 

A prominent example of indivisible qubit channel is the so-called optimal quantum NOT process transforming a state $\varrho$ into a noisy version of its ``orthogonal'' state $\varrho^\perp=\one-\varrho$. In particular,
$\mcE_{\rm NOT}(\varrho)=\frac{1}{3}(\one+\varrho^\perp)$.
Applying this indivisible channel twice, we obtain
\begin{eqnarray}
\nonumber
\mcE_{\rm NOT}^2(\varrho)
&=&\frac{2}{3}\mcE_{\rm NOT}\left(\frac{1}{2}\one\right)+\frac{1}{3}\mcE_{\rm NOT}\left(\varrho^\perp\right)\\\nonumber
&=&\frac{2}{9}\left(\one+\frac{1}{2}\one\right)+\frac{1}{9}\left(\one+\varrho\right)=
\frac{8}{9}\frac{1}{2}\one+\frac{1}{9}\varrho\,,
\end{eqnarray}
thus, $\mcE_{\rm NOT}^2=\frac{1}{9}\mcI+\frac{8}{9}\mcN$, where
$\mcI$ is the identity map (noiseless quantum channel)
and $\mcN$ is the quantum channel transforming all states into the
total mixture state (also known as the completely depolarizing channel).
Since $\mcN^2=\mcN$, $\mcI^2=\mcI$, and $\mcI\circ\mcN=\mcN\circ\mcI=\mcN$,
it follows that the $n$th power of $\mcD_q=q\mcI+(1-q)\mcN$ equals
$\mcD_q^n=q^n\mcI+(1-q^n)\mcN=\mcD_{q^n}$. Setting $q=\sqrt[n]{1/9}$,
we obtain $\mcD_{\sqrt[n]{1/9}}^n=\mcE_{\rm NOT}^2$; hence it is $n$ divisible.
In summary, this example illustrates that concatenating two indivisible
maps might result in an infinitely divisible one. We see that not only is the division not
unique, but it might also be qualitatively different. 

In this Letter, we will show that the set of at most $n$-divisible (with $n\geq 2$)  qubit
channels is empty for the qubit case, because all of them are either divisible in infinite parts
or indivisible. To do this, we will introduce a specific decomposition of general (finite-dimensional) quantum channels into a boundary element of the set of channels
and a Markovian channel, i.e., $\mcE=e^\mcL \mcE_\text{boundary}$, where $\mcL$ is
a suitable Lindbladian. Interestingly, for the qubit case we will additionally show that $\mcE=e^\mcL \mcE_\text{indivisible}$ holds for any non-Markovian channel.

\par

%

\prlsection{Quantum channels and divisibility}
Physics of quantum systems is most commonly formulated within the framework of
the associated $d$-dimensional complex Hilbert spaces $\mcH_d$. We will
assume the dimension is finite. The set of linear operators $\mcB(\mcH_d)$
on $\mcH_d$ contains density operators representing quantum states, and
quantum channels $\mcE$ are \emph{completely positive trace-preserving linear
maps} acting on $\mcB(\mcH_d)$, i.e., $\tr{\mcE(X)}=\tr{X}$ for all
$X\in\mcB(\mcH_d)$ and $(\id_k \otimes \mcE)(A)\geq 0$ for all
positive operators $A\in\mcB(\mcH_k\otimes\mcH_d)$ and all 
integers $k\geq 1$ determining the dimension of the Hilbert space $\mcH_k$.
Let us denote by $\cptpd{d}$ the set of all quantum channels.
Every channel can be (nonuniquely) expressed in the so-called
operator-sum form as follows: $\mcE[X]=\sum_j K_j X K_j^{\dagger}$,
where $\sum_j K_j^\dagger K_j=\one$. The minimum number of operators $K_j$
required in the previous expression is called the \emph{Kraus rank}
of $\mcE$. If $\mcE(\one)=\one$, the channel is \emph{unital}.
If $\mcE[X]=UXU^\dagger$ for some \textit{unitary operator}
$U$ (meaning $UU^\dagger=U^\dagger U=\one$), we say the channel is
\textit{unitary}.

The set of quantum channels $\cptpd{d}$ is convex and
closed under the composition. A quantum channel $\mcE$ is called
\emph{indivisible} if it cannot be written as a concatenation of
two nonunitary channels, namely, if $\mcE =
\mcE_1\mcE_2$ implies that either $\mcE_1$ or $\mcE_2$, exclusively, is a
unitary channel. If the channel is not indivisible, it is said to be
\emph{divisible}.  We denote the set of divisible channels by ${\sf C}^{\rm
  div}$ and that of indivisible channels by \Ind{}.
Following this definition, unitary channels are divisible, because for them
both (decomposing) channels $\mcE_{1,2}$ must be unitary. 
The concept of indivisible channels resembles the concept of prime numbers:
Unitary channels play the role of unity (which are not indivisible or prime), i.e.,
a composition of indivisible and a unitary channel results in an indivisible
channel.

The ability to divide quantum channels into smaller ones is intimately related with the concept of continuous time evolution, especially with the question of how a given channel can emerge from time evolution. This question has been explored since the seminal work of Evans~\cite{Evans1977}, where it was discovered that the subset of quantum channels $\LDiv$ achievable by (time-independent) Lindblad master equations is quite limited. In particular, 
$\mcE\in\LDiv{}$ if $\mcE=e^\mcL$, where the Lindblad generator $\mcL$
is defined as follows:
\begin{equation}
\mcL[\varrho]= i [\varrho,H]
 +\sum_{\alpha,\beta=0}^{d^2-1}  G_{\alpha \beta}
     \left( 
         F_{\alpha}\varrho F^{\dagger}_{\beta}
             -\frac{1}{2} \lbrace F^{\dagger}_{\beta} F_{\alpha},\varrho \rbrace 
     \right),
\label{eq:canonicalform}
\end{equation}
with $G\geq 0$ and $\left\{F_\alpha \right\}_{\alpha=0}^{d^2-1}$ form an orthonormal basis of $\mcB(\mcH_d)$, i.e., $\tr{F_\alpha^\dagger F_\beta}=\delta_{\alpha\beta}$ with $\tr F_i:=\delta_{i0}/\sqrt{d}$.
Recently, several classes of divisibility have been
found~\cite{cirac,davalosdivisibility}. Let us introduce the set of
$n$-divisible channels $\cptp^{(n)}$. We say $\mcE\in\cptp^{(n)}$ if there
exist a collection of channels $\mcE_1,\dots,\mcE_n$ such that
$\mcE=\mcE_1\circ\cdots\circ\mcE_n$. Clearly, $\cptp^{(n+1)}\subset\cptp^{(n)}$
and our goal is to characterize quantum channels inside
$\cptp^{(n)}\setminus\cptp^{(n+1)}$.

Let us recall that \textit{full} Kraus rank channels are divisible \cite{cirac}, and, in the case of unital qubit channels, only Kraus rank three channels are indivisible. Using the representation of qubit unital channels from Ref.~\cite{ruskai}, in which the set of unital channels is represented by a tetrahedron (see~\fref{fig:1}), the indivisible channels correspond to the faces of the tetrahedron (excluding the edges); thus, they form a subset of measure zero.

In what follows, we are ready to state the main theorem.


\par
\begin{theorem}[Lindblad-Boundary decomposition]
\label{thm:main}
Any $\mcE \in \cptpd{d}$ can be written as follows:
\begin{equation}
\mcE=e^\mcL \mcE_\text{boundary},
\label{eq:main_formula}
\end{equation} 
where $\mcE_\text{boundary}$ is a channel in the boundary between $\cptp_d$ and trace-preserving maps, \ie{} it has Kraus rank less than $d^2$~\cite{zimansbook}, with $d=\dim \mcH$, and $\mcL$ is a Lindbladian.
\end{theorem}
Given a channel $\mcE$, the logic of the proof relies in finding families with the form $\mcF_t=e^{-t \mcL} \mcE$ such that for large $t$, they no longer parametrize quantum channels. This, together with continuity arguments, let us prove that there exists a value of $t$ such that $\mcF_t$ is a boundary channel. Then, using the invertibility of $e^\mcL$ we arrive to the desired decomposition. The formal proof is given in appendix~\ref{app:thm1}; there, we stress that the singular case needs a special treatment. Some results from the literature are used for the proof~\cite{cirac,choi,chuangbook}.

\par
Let us note that for nonsingular channels the proof is independent of the order; thus, also the decomposition $\mcE= \mcE_\text{boundary} e^\mcL$ is possible. However, this is no longer the case for singular channels. Consider, for instance, the completely depolarizing channel $\mcN[\Delta]:=\one/d \tr \Delta$ and assume that $\mcN= \mcE_\text{boundary} e^\mcL$. Since $e^\mcL$ is invertible, we can solve for $\mcE_\text{boundary}$, but $\mcN e^{-\mcL}=\mcN=\mcE_\text{boundary}$ for all $\mcL$; this is a contradiction, and, therefore, the decomposition $\mcN= \mcE_\text{boundary} e^\mcL$ is not possible.
An example of the Lindblad-boundary decomposition is given for the depolarizing channel in the appendix~\ref{app:example}.
\par
As a direct consequence we obtain the following theorem.
\par
\begin{theorem}[Nonexistence of strict $n$-divisibility ($n\geq 2$) for full Kraus rank channels]
\label{thm:full_rank}
Every finite-dimensional quantum channel, $\mcE \in \cptpd{d}$, with full Kraus rank, is divisible in an infinite (also uncountable) number of channels.
\end{theorem}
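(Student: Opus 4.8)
The plan is to read Theorem~2 directly off the Lindblad-Boundary decomposition. Given $\mcE\in\cptpd{d}$ of full Kraus rank, Theorem~1 supplies $\mcE=e^\mcL\mcE_\text{boundary}$, where $\mcE_\text{boundary}$ has Kraus rank at most $d^2-1$ and $\mcL$ is a Lindbladian. The first observation I would make is that $\mcL/n$ is again a legitimate generator, since rescaling the positive matrix $G$ in \eref{eq:canonicalform} by $1/n$ preserves $G\ge 0$; hence each $e^{\mcL/n}$ is a quantum channel and $e^\mcL=(e^{\mcL/n})^n$. Substituting gives $\mcE=(e^{\mcL/n})^n\,\mcE_\text{boundary}$, a concatenation of $n$ Markovian channels followed by one boundary channel, valid for every integer $n$. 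The only real content left is to certify that these factors are genuinely \emph{non-unitary}, so that the decomposition counts as an $n$-fold division in the sense used throughout the paper.

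The step I expect to carry the weight, and where the full-Kraus-rank hypothesis is essential, is proving that $e^\mcL$ is non-unitary. I would argue from sub-multiplicativity of the Kraus rank under composition: if channels $\mcA,\mcB$ have Kraus operators $\{A_i\}$ and $\{B_j\}$, then $\{A_iB_j\}$ are Kraus operators for $\mcA\mcB$, so $\mathrm{rank}(\mcA\mcB)\le\mathrm{rank}(\mcA)\,\mathrm{rank}(\mcB)$. Applying this to $\mcE=e^\mcL\mcE_\text{boundary}$ with $\mathrm{rank}(\mcE)=d^2$ and $\mathrm{rank}(\mcE_\text{boundary})\le d^2-1$ forces $\mathrm{rank}(e^\mcL)\ge d^2/(d^2-1)>1$, \ie{} $\mathrm{rank}(e^\mcL)\ge 2$. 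Since a trace-preserving channel of Kraus rank one has a single Kraus operator that is an isometry on $\mcH_d$, hence unitary, this shows $e^\mcL$ is non-unitary, equivalently the dissipative part of $\mcL$ does not vanish. This is the crux: for a boundary channel ($\mathrm{rank}<d^2$) the same inequality is vacuous, which is exactly why the theorem is restricted to full Kraus rank.

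The same bound then upgrades to each factor: if some $e^{\mcL/n}$ had Kraus rank one it would be unitary, and then $e^\mcL=(e^{\mcL/n})^n$ would have Kraus rank at most one, contradicting $\mathrm{rank}(e^\mcL)\ge 2$. Hence every $e^{\mcL/n}$ is non-unitary, and $\mcE=(e^{\mcL/n})^n\,\mcE_\text{boundary}$ exhibits $\mcE$ as a composition of at least $n$ non-unitary channels for arbitrary $n$, \ie{} $\mcE$ is divisible into infinitely many channels. For the uncountable claim I would pass from the discrete powers to the continuous semigroup $\{e^{\mcL s}\}_{s\in[0,1]}$, each member of which is non-unitary for $s>0$ by the identical rank argument: any finite increasing choice of cut times $0=t_0<t_1<\dots<t_m=1$ yields $\mcE=e^{\mcL(t_m-t_{m-1})}\circ\cdots\circ e^{\mcL(t_1-t_0)}\circ\mcE_\text{boundary}$ with $m$ non-unitary Markovian factors, and since the cut times range over the continuum $[0,1]$ there is an uncountable family of intermediate channels through which $\mcE$ divides.
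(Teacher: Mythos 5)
Your proof is correct and follows essentially the same route as the paper, which presents this theorem as a direct consequence of the Lindblad-Boundary decomposition combined with the semigroup splitting $e^{\mcL}=(e^{\mcL/n})^{n}$. Your Kraus-rank sub-multiplicativity argument showing that $e^{\mcL}$ (and hence each $e^{\mcL/n}$) is non-unitary is a welcome explicit verification of a point the paper leaves implicit, and it correctly identifies where the full-Kraus-rank hypothesis is used.
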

%



\par
Let us now introduce a result concerning the freedom of $\mcL$.

\par
\begin{proposition}[Pure dissipative choice]
For any channel, the Lindblad-boundary (LB) decomposition can always be performed by choosing the Lindbladian $\mcL$ to be  purely dissipative, \ie{} $\mcL[\Delta]=\sum_{\alpha, \beta=0}^{d^2-1} G_{\alpha \beta}\left(F_\alpha \Delta F_\beta^\dagger{}- \frac{1}{2} \left\{ F_\beta^\dagger{} F_\alpha,\Delta \right \} \right)$; see \eref{eq:canonicalform}.
\end{proposition}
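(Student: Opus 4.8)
The plan is to run the backward-flow construction behind the Lindblad-Boundary decomposition, \Eref{eq:main_formula}, but to power it with a generator that is \emph{manifestly} purely dissipative, so that a Hamiltonian term is never invoked. The natural engine is the depolarizing generator $\mcL_{\rm dep}=\gamma(\mcN-\mcI)$ with $\gamma>0$. First I would verify that it lies in the purely dissipative family: writing it in the basis $\{F_\alpha\}$ of \Eref{eq:canonicalform} and using the completeness identity $\sum_\alpha F_\alpha X F_\alpha^\dagger=\one\,\tr X$, the choice $H=0$ and $G=c\,\diag(0,1,\dots,1)\ge 0$ reproduces $\mcL_{\rm dep}$ (with $\gamma=cd$). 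Its only nonzero superoperator eigenvalue is $-\gamma$, on the traceless subspace, so that $e^{-t\mcL_{\rm dep}}=\mcD_{q(t)}$ with $q(t)=e^{\gamma t}\ge 1$ grows without bound.

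Next I would study the backward flow $\mcF_t=e^{-t\mcL_{\rm dep}}\,\mcE=(1-q)\,\mcN+q\,\mcE$, whose Choi matrix is $C_{\mcN}+q\,(C_{\mcE}-C_{\mcN})$. The difference $C_{\mcE}-C_{\mcN}$ is Hermitian and traceless, and for every $\mcE\neq\mcN$ it is nonzero, hence carries a strictly negative eigenvalue; therefore $\mcF_t$ ceases to be completely positive once $q$ is large enough. Since $\mcF_0=\mcE$ is a channel, continuity of the spectrum yields a first time $t^\star$ at which the Choi matrix is still positive semidefinite but acquires a zero eigenvalue, i.e. $\mcF_{t^\star}$ is a channel of Kraus rank $<d^2$, a boundary channel. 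Inverting the (always invertible) map $e^{t^\star\mcL_{\rm dep}}$ then gives $\mcE=e^{t^\star\mcL_{\rm dep}}\,\mcF_{t^\star}$, an LB decomposition with purely dissipative Lindbladian, valid for all $\mcE\neq\mcN$.

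The single channel escaping this argument is $\mcN$ itself, for which $C_{\mcE}-C_{\mcN}=0$ and the depolarizing flow is stationary; this is the hard part, and it coincides with the singular case that already needed separate care in Theorem~1. Here I would build the generator by hand. Because $e^{\mcL}$ is invertible, $\mcN=e^{\mcL}\,\mcE_\text{boundary}$ forces the replacement form $\mcE_\text{boundary}[\rho]=\sigma\,\tr\rho$ with $\sigma=e^{-\mcL}[\one/d]$, whose Kraus rank $d\cdot\mathrm{rank}\,\sigma$ is below $d^2$ exactly when $\sigma$ is rank deficient. It therefore suffices to exhibit a purely dissipative $\mcL$ whose semigroup carries some rank-deficient state to $\one/d$ in finite time. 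The cascade generator with jump operators $L_k\propto\proj{0}{k}$ for $k=1,\dots,d-1$ does this: starting from $\sigma=\tfrac{1}{d-1}\sum_{k\ge 1}\proj{k}{k}$ the populations evolve as $p_k(t)=\tfrac{1}{d-1}e^{-\gamma t}$ and $p_0(t)=1-e^{-\gamma t}$, which all equal $1/d$ at $t=\gamma^{-1}\ln\!\big(d/(d-1)\big)$, while the coherences decay to zero; thus $\mcN=e^{\mcL}\,\mcE_\text{boundary}$ with $\sigma$ of rank $d-1<d$ and $\mcL$ purely dissipative. The main obstacle, as anticipated, is precisely this singular point: the generic construction degenerates at $\mcN$, and one must check that a purely dissipative trajectory can reach the \emph{full}-rank target $\one/d$ from a rank-deficient state in finite time.
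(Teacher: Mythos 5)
Your proposal is correct, and for the generic case it takes a genuinely different route from the paper. The paper splits by singularity: for non-singular channels it reuses the determinant argument of Theorem~1 verbatim, observing that $\det e^{-t\mcL}=e^{td\tr G}$ depends only on $G$, so the Hamiltonian may simply be dropped; for singular channels it introduces the cascade (amplitude-damping) generator with traceless jump operators $F_i=\proj{0}{i}$ taken in the eigenbasis of a chosen output state $\varrho=\mcE[\sigma]$, and shows the diagonal matrix element $\bra{0}e^{-t\mcL_{\rm AD}}[\varrho]\ket{0}=\lambda_0+\sum_{k\ge1}\lambda_k(1-e^{\gamma_k t})$ eventually goes negative. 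Your depolarizing-flow argument instead treats singular and non-singular channels uniformly: since $C_{\mcE}-C_{\mcN}$ is traceless, Hermitian and nonzero whenever $\mcE\neq\mcN$, the backward flow $C_{\mcN}+q(C_{\mcE}-C_{\mcN})$ must leave the positive cone, and the exceptional set shrinks from ``all singular channels'' to the single point $\mcN$. What you then build by hand for $\mcN$ is essentially the generator the paper uses for the whole singular class (and your answer reproduces the paper's own worked example for $\mcN$: exit time $\gamma^{-1}\ln(d/(d-1))$ and a boundary factor of Kraus rank $d^2-d$). The trade-off: the paper's route recycles the Theorem~1 machinery with minimal extra work, while yours is more self-contained and makes the boundary-crossing mechanism explicit at the level of the Choi spectrum rather than through the determinant bound. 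One shared caveat, not a gap: if $\mcE$ already lies on the boundary your exit time is $t^\star=0$ and the decomposition degenerates to $\mcL=0$, which is consistent with the statement of Theorem~1 but worth flagging.
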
 
\begin{proof} For non-singular channels, only $G$ participates in the proof of Theorem \ref{thm:main}; thus, we can always simply omit the Hamiltonian part of any given generator. On the other hand, for the nonsingular channels a suitable pure dissipative generator must be found (see the appendix~\ref{app:prop1}).
\end{proof}
\par
\prlsection{Divisibility of qubit channels}
Before discussing the results that can be derived using the novel decomposition stated by Theorem \ref{thm:main}, let us first discuss corrections of two theorems needed to characterize the boundary of $\cptp{}_2$ from the point of view of the divisibility. The known results rely on Lorentz normal forms~\cite{Verstraete2001}, and they arise as an analogous decomposition to singular value decomposition but taking the Lorentz metric instead of the Euclidean.
\par

Furthermore, we will exploit Theorem 8 in Ref.~\cite{Verstraete2002} (being an adaptation of Theorem 3 in Ref.~\cite{Verstraete2001}). It turns out this theorem needs to be extended (see Theorem~\ref{theorem2} below) and corrected, because its proof does not cover all Lorentz normal forms. An explicit counterexample has been reported in the appendix in  Ref.~\cite{davalosdivisibility}.

%
\begin{figure}
\centering
\includegraphics[width=5cm]{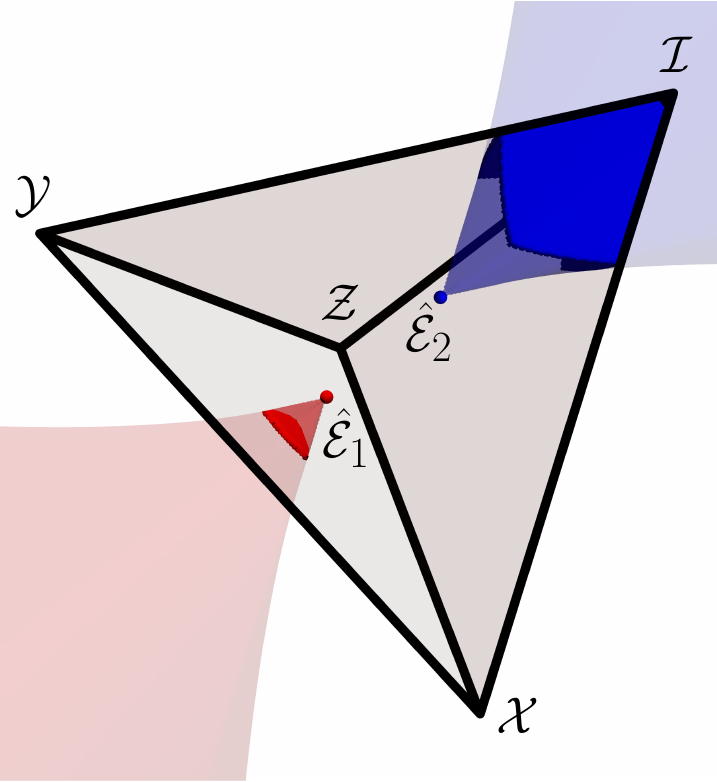}
\caption{Regions of trace-preserving Pauli diagonal maps corresponding to two families with the form $e^{-\mcL t}\mcE$, $t \geq 0$ (with $e^{\mcL t}$ Pauli diagonal with positive eigenvalues~\cite{davalosdivisibility}). The family corresponding to $\hat\mcE_1=\text{diag}(1,-1/5,-1/5,-1/5)$ (red dot), which is not infinitesimally divisible, crosses only indivisible channels (Kraus rank three); this is a manifestation of Theorem \ref{thm:lb_qubit}. On the other hand, the family corresponding to $\hat\mcE_2=\text{diag}(1,1/5,1/5,1/5)$ (blue dot) crosses both divisible and indivisible channels. $\mcX$, $\mcY$, and $\mcZ$ denote the Pauli unitaries, solid colors are for the crossings, and transparent and light transparent are for the interior and exterior of the tetrahedron, respectively.
\label{fig:1}
}
\end{figure}

\par
\begin{theorem}[Lorentz normal forms for qubit channels]
\label{theorem2}
Given a qubit quantum channel, there exist two Kraus rank one (not necessarily nonincreasing trace) linear maps over $ \mcB(\mcH)$, $\mcT_1$ and $\mcT_2$ such that $\mcE = \mcT_2 \mcM \mcT_1$, where $\mcM$ is one of the following forms (in the Pauli basis):
\begin{enumerate}
\item $\hat \mcM$ is diagonal---\ie{}, $ \mcM$ is a Pauli channel---or
\item $\hat \mcM$ is nondiagonal and has the following form,
 \begin{equation}
\hat \mcM(v,x,z):=\left(
\begin{array}{cccc}
 1 & 0 & 0 & z \\
 0 & x f(v,z) & 0 & 0 \\
 0 & 0 & x f(v,z) & 0 \\
 v & 0 & 0 & v-z+1 \\
\end{array}
\right),
\end{equation}
\end{enumerate}
with $f(v,z)=\sqrt{1+v-z-vz}$. 
The form $\mcM(v,x,z)$ has at most Kraus rank 3 and $\mcM(v,1,z)$ at most Kraus rank 2.
\end{theorem}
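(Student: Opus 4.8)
The plan is to work in the real $4\times 4$ matrix representation $\hat\mcE$ of the qubit channel in the Pauli (Bloch) basis, where trace preservation fixes the first row to $(1,0,0,0)$. The essential tool is the observation that the Kraus-rank-one maps $\mcT_i[\Delta]=A_i\Delta A_i^\dagger$, with $A_i\in\mathrm{GL}(2,\mathbb{C})$, act on $\hat\mcE$ by left and right multiplication with proper orthochronous Lorentz transformations, through the double cover $\mathrm{SL}(2,\mathbb{C})\to\mathrm{SO}^+(1,3)$ (a general $A_i$ contributes only an extra positive scalar, which is admissible since the $\mcT_i$ need not be trace non-increasing). Hence finding the normal form $\mcM$ amounts to bringing $\hat\mcE$ into a canonical shape under the action $\hat\mcE\mapsto L_2\,\hat\mcE\,L_1$ with $L_1,L_2\in\mathrm{SO}^+(1,3)$, that is, to establishing a Lorentz analogue of the singular value decomposition.

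First I would classify the orbits through the symmetric matrix $\eta\,\hat\mcE^{T}\eta\,\hat\mcE$, where $\eta=\mathrm{diag}(1,-1,-1,-1)$; using $L_1^{T}\eta L_1=\eta$ one checks that under a right Lorentz transformation this object transforms by similarity, so its eigenvalues and Jordan type are the relevant invariants. When it is diagonalizable with a Lorentz-orthonormal eigenbasis, one chooses $L_1$ to diagonalize it and $L_2$ to absorb the residual rotation, yielding the diagonal form of case~(1), a Pauli channel. The delicate situation, and the reason the earlier statement of Refs.~\cite{Verstraete2002,Verstraete2001} must be corrected, is the non-diagonalizable case, in which an eigenvector of $\eta\,\hat\mcE^{T}\eta\,\hat\mcE$ lies on the light cone and the matrix carries a nontrivial Jordan block. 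This is precisely the family omitted in the original argument and exhibited by the counterexample in Ref.~\cite{davalosdivisibility}. I would treat it by reducing $\hat\mcE$ to a block-triangular shape adapted to the invariant null direction and then exhausting the residual Lorentz freedom (boosts and rotations fixing that direction) to annihilate the off-diagonal entries, arriving at the three-parameter form $\hat\mcM(v,x,z)$.

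Next I would fix the explicit parametrization. Trace preservation pins the top row, while the requirement that $\mcM$ descend from a genuine (completely positive) channel through the filtering maps forces a relation among the remaining entries; carrying the null-eigenvector condition through the reduction yields exactly $f(v,z)=\sqrt{1+v-z-vz}$ together with the stated placement of $v$, $z$ and $x f(v,z)$. Finally, the Kraus-rank claims follow by computing the \Jami{} matrix of $\mcM(v,x,z)$ and counting its nonzero eigenvalues: the generic non-diagonal form has a rank-three Choi matrix, and setting $x=1$ makes one eigenvalue vanish, reducing the rank to two.

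The main obstacle will be the non-diagonalizable (light-cone) case: showing that the residual Lorentz freedom is exactly large enough to reach the canonical form without introducing spurious parameters, and that no valid channel escapes the two listed families. Since this is the gap in the original proof, the argument must establish \emph{completeness} of the classification, not merely exhibit the normal forms, which is why the case analysis of the Jordan structure of $\eta\,\hat\mcE^{T}\eta\,\hat\mcE$ has to be carried out exhaustively.
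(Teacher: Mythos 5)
Your proposal follows the same skeleton as the paper's proof---reduce to a Lorentz normal form under the left/right action of the rank-one filters, identify the degenerate (light-cone) orbit as the source of the non-diagonal family, fix the parametrization by complete positivity, and read off the Kraus rank from the eigenvalues of the Choi matrix---but it places the work differently. The paper does not re-derive the Lorentz singular value decomposition: it passes to the \Jami{} state, writes $R=\hat\mcE\,\Phi_{\rm T}$ with $\Phi_{\rm T}=\diag(1,1,-1,1)$, and imports Theorem 3 of Ref.~\cite{Verstraete2001} to obtain $R=L_2\Sigma L_1^{\rm T}$ with $\Sigma$ either diagonal or of the fixed non-diagonal shape $\Sigma_2$; all of the new content is then in the analysis of $\tau_\mcM\geq 0$ after rescaling by the entry $a$. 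Your plan instead re-proves the decomposition from the Jordan structure of $\eta\,\hat\mcE^{T}\eta\,\hat\mcE$, which is legitimate (it is essentially how Ref.~\cite{Verstraete2001} proceeds) but substantially longer, and to establish completeness you must exhaust \emph{all} Jordan types arising for an indefinite inner product (complex eigenvalue pairs and Jordan blocks of sizes $2$, $3$, $4$), not only the single null-eigenvector case you describe. One point to correct: $f(v,z)=\sqrt{1+v-z-vz}$ is not forced by the null-eigenvector condition; in the paper it arises as the \emph{bound} that $\tau_\mcM\geq 0$ imposes on the remaining free entry $d'=d/a$, namely $|d'|\leq f(v,z)$, so that $d'=x f(v,z)$ with $x\in[-1,1]$ a genuinely free parameter rather than a quantity determined by the reduction. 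With that adjustment your route reaches the same statement.
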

\par
Using Theorem~\ref{theorem2} (which proof can be found in the appendix), we can patch the proof of Theorem 23 in Ref.~\cite{cirac}, which characterizes indivisible qubit channels.
%
This theorem states that a qubit channel is indivisible if and only if it has diagonal Lorentz normal form with Kraus rank 3. The problematic part is the discarding of channels having nondiagonal Lorentz normal form as indivisible. The original proof relies on noticing that the nondiagonal case can always be written as a concatenation of two channels with at most Kraus rank 2. Fortunately a similar trick holds also for the missing cases that we announced in Theorem~\ref{theorem2}; observe that
\begin{equation}
\hat \mcM(v,x,z)=\hat \mcM(v,1,z)\diag\left(1,x,x,1 \right)
\end{equation}
where $\hat \mcM(v,1,z)$ and $\diag\left(1,x,x,1 \right)$ have at most Kraus rank two. Now we can define the CP maps $\hat \mcE_\mcM=\hat \mcT_2 \hat\mcM(v,1,z) \hat \mcT_1$ and $\hat \mcE_D= \hat\mcT_2 \diag(1,x,x,1) \hat \mcT_1$, such that $\hat \mcE=\hat\mcE_\mcM \hat \mcT \hat\mcE_D$ with $\mcT=\mcT_1^{-1}\mcT_2^{-1}$ (a CP Kraus rank one operation), given that $\mcT_{1,2}$ are invertible. Then we have a concatenation of two not necessarily trace-preserving CP maps with at most Kraus rank 2, $\mcE_\mcM$ and $\mcT \mcE_D$. To finish, Theorem 12 in Ref.~\cite{cirac} guarantees that $\mcE$ can be written as a concatenation of two trace-preserving maps with at most Kraus rank 2. Since Kraus rank 2 maps are infinitesimally divisible (${\sf C}^{\text{CP}}$) (see Theorem 19 in Ref.~\cite{cirac}), this completes the correction to the proof.

\par
Having 
Theorem 23 in Ref.~\cite{cirac} valid, we can safely establish that the boundary of the set of qubit channels is completely characterized in terms of the divisibility types discussed here. And what is particularly relevant is that there are only indivisible channels (the ones having Kraus rank 3 and diagonal Lorentz normal form) and infinitesimally divisible channels (the ones having Kraus rank 3 with nondiagonal Lorentz form, Kraus rank 2 and trivially Kraus rank 1). Using these facts, we can prove the following.

\par
\begin{theorem}[Lindblad-Boundary decomposition of non-Markovian qubit channels]
\label{thm:lb_qubit}
Let $\mcE \in \cptp{}_2$ be a qubit channel that is not infinitesimally divisible, \ie{}, $\mcE \not\in \cpDiv{}$, and then its Lindblad-boundary decomposition reads
\begin{equation}
\mcE=e^\mcL \mcE_\text{indivisible}\,.
\label{eq:markov_non_markov}
\end{equation}
Moreover, if $\mcE$ is indivisible, then $e^\mcL$ is unitary.
\end{theorem}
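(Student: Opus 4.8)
The plan is to feed the given channel into the Lindblad-Boundary decomposition and then let the boundary dichotomy established just above do the work. First I would apply Theorem 1 to $\mcE$, obtaining $\mcE = e^\mcL \mcE_\text{boundary}$ with $\mcE_\text{boundary}$ a boundary element of $\cptpd{2}$ (Kraus rank $< d^2 = 4$) and $\mcL$ a Lindbladian. The corrected version of Theorem 23 of Ref.~\cite{cirac} discussed above tells us that every boundary qubit channel is \emph{either} indivisible (Kraus rank $3$, diagonal Lorentz normal form) \emph{or} infinitesimally divisible (Kraus rank at most $2$). Hence the entire content of \Eref{eq:markov_non_markov} reduces to ruling out the second alternative for $\mcE_\text{boundary}$.

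To exclude it I would argue by contradiction: suppose $\mcE_\text{boundary}\in\cpDiv$. Note that $e^\mcL$ is itself infinitesimally divisible, since $e^\mcL=(e^{\mcL/n})^n$ with $e^{\mcL/n}$ a channel for every $n$ and $e^{\mcL/n}\to\mcI$ as $n\to\infty$. The key step is then that $\cpDiv$ is closed under composition: writing $\mcE_\text{boundary}=\prod_i \mcF_i$ and $e^\mcL=\prod_j \mcG_j$ with all factors arbitrarily close to $\mcI$, the product $\mcE = (\prod_j \mcG_j)(\prod_i \mcF_i)$ is again an ordered factorization of $\mcE$ into channels arbitrarily close to $\mcI$, so $\mcE\in\cpDiv$. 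This contradicts the hypothesis $\mcE\not\in\cpDiv$. Therefore $\mcE_\text{boundary}$ is not infinitesimally divisible, and by the dichotomy it must be indivisible, which is precisely $\mcE = e^\mcL \mcE_\text{indivisible}$.

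For the \emph{moreover} statement I would first observe that an indivisible qubit channel is a boundary channel of Kraus rank exactly $3$, whereas infinitesimally divisible boundary channels have Kraus rank at most $2$ (Theorem 19 of Ref.~\cite{cirac}); these classes are disjoint, so an indivisible $\mcE$ automatically satisfies $\mcE\not\in\cpDiv$ and the first part applies, giving $\mcE = e^\mcL \mcE_\text{indivisible}$. Since $\mcE$ is indivisible and this exhibits it as a composition of the two channels $e^\mcL$ and $\mcE_\text{indivisible}$, the definition of indivisibility forces exactly one of them to be unitary. But $\mcE_\text{indivisible}$ has Kraus rank $3$ and hence cannot be unitary (unitary channels have Kraus rank $1$), so $e^\mcL$ must be the unitary factor, completing the argument.

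The main obstacle I expect is the closure step in the second paragraph: making the sense in which $\cpDiv$ is closed under composition fully precise, i.e.\ matching the exact definition of infinitesimal divisibility inherited from Ref.~\cite{cirac} (factorizations into near-identity channels for every tolerance $\epsilon$) and checking that concatenating two such factorizations, including the Markovian one coming from $e^\mcL$, genuinely stays within the class. Everything else—invoking the boundary dichotomy and using the Kraus-rank obstruction to unitarity—is direct bookkeeping once this closure property is pinned down.
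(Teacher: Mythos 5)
Your proof is correct and follows essentially the same route as the paper's: apply the Lindblad-Boundary decomposition, invoke the dichotomy that boundary qubit channels are either indivisible or infinitesimally divisible, and rule out the latter because $e^\mcL\in\cpDiv$ and $\cpDiv$ is closed under composition, which would force $\mcE\in\cpDiv$. You are in fact somewhat more explicit than the paper, which merely asserts the closure step and leaves the \emph{moreover} clause (unitarity of $e^\mcL$ via the Kraus-rank-3 obstruction) implicit.
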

\par
\begin{proof}
The proof is straightforward if we recall the construction of one-parametric families of the form $\mcF_t=e^{-t\mcL} \mcE$. We have already proved there always exists some $t_\text{min}$ such that $\mcF_{t_\text{min}}$ is a boundary channel. As mentioned above, we may observe it can be only infinitesimally divisible or indivisible. If $\mcF_{t_\text{min}}$ is infinitesimally divisible, then $\mcE$ must be also infinitesimally divisible; thus, we have a contradiction. Therefore, we conclude that $\mcF_{t_\text{min}}$ is indivisible. Thus, qubit channels are either indivisible, infinitesimally divisible, or with the form in~\eref{eq:markov_non_markov}, \ie{} divisible in an arbitrary number of parts due to the infinitesimal divisibility of $e^\mcL$ (see~\fref{fig:1}). This motivates the following theorem.
\end{proof}

\par
\begin{theorem}[Nonexistence of strictly $n$-divisible ($n\geq 2$) qubit channels]
\label{thm:qubits_non_existence}
Every divisible qubit channel is divisible in an infinite (in fact, uncountable) number of channels. Therefore, $\cptpd{2}^{(n)}\setminus \cptpd{2}^{(n+1)}$ is empty for all integer $n \geq 2$ or, equivalently, $\cptpd{2}^{(n)}=\cptpd{2}^{\rm div}$ for all $n\geq 2$.
\end{theorem}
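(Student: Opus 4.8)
The plan is to reduce the statement to a single inclusion and then dispatch it through the Lindblad-Boundary decomposition. Since the families are nested, $\cptpd{2}^{(n+1)}\subset\cptpd{2}^{(n)}$, and since a divisible channel is by definition an element of $\cptpd{2}^{(2)}$, it suffices to prove that every divisible qubit channel already belongs to $\cptpd{2}^{(n)}$ for every $n$; the emptiness of each $\cptpd{2}^{(n)}\setminus\cptpd{2}^{(n+1)}$ and the equalities $\cptpd{2}^{(n)}=\cptpd{2}^{\rm div}$ then follow at once. I would therefore fix a divisible $\mcE\in\cptpd{2}$ and split the analysis according to whether $\mcE$ is infinitesimally divisible.

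First I would treat the case $\mcE\in\cpDiv$. Infinitesimal divisibility means $\mcE$ sits inside a CP-divisible family, so it can be sliced into an arbitrary number of concatenated channels, each arbitrarily close to the identity; in particular $\mcE\in\cptpd{2}^{(n)}$ for all $n$, and slicing the underlying continuous family produces uncountably many nontrivial factors. This case requires no new input beyond the known fact that $\cpDiv$ channels admit such continuous refinements.

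The substantive case is $\mcE$ divisible but $\mcE\notin\cpDiv$. Here I would invoke Theorem~\ref{thm:lb_qubit}, which gives $\mcE=e^\mcL\mcE_\text{indivisible}$. The key step is to argue that $e^\mcL$ cannot be unitary: if it were, then $\mcE$ would be the composition of a unitary channel with an indivisible one, and---using the fact recorded above that composing an indivisible channel with a unitary yields again an indivisible channel---$\mcE$ would be indivisible, contradicting the hypothesis. Hence $e^\mcL$ is a genuine, non-unitary Markovian map. It is therefore infinitely divisible, since $e^\mcL=(e^{\mcL/n})^n$ for every integer $n$, which yields the factorization $\mcE=(e^{\mcL/n})^n\,\mcE_\text{indivisible}$ into $n+1$ nontrivial channels, so $\mcE\in\cptpd{2}^{(n)}$ for all $n$. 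Uncountability follows by cutting the Markovian evolution $\{e^{\mcL t}\}_{t\in[0,1]}$ at a continuum of intermediate times.

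Combining the two cases, every divisible qubit channel lies in $\bigcap_n \cptpd{2}^{(n)}$, which together with the nesting gives $\cptpd{2}^{(n)}=\cptpd{2}^{\rm div}$ for all $n\geq 2$ and hence $\cptpd{2}^{(n)}\setminus\cptpd{2}^{(n+1)}=\emptyset$. The main obstacle is not the bookkeeping above but the structural input it rests on: establishing that the boundary channel produced by the Lindblad-Boundary decomposition is either indivisible or infinitesimally divisible (the dichotomy encoded in Theorem~\ref{thm:lb_qubit}), which in turn depends on the corrected characterization of indivisible qubit channels (Theorem 23 of Ref.~\cite{cirac}) via the extended Lorentz-normal-form result in Theorem~\ref{theorem2}. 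Once that dichotomy is in hand, the non-unitarity of $e^\mcL$ in the divisible case is the only genuinely new deduction, and it is exactly what forces the jump from two-fold to uncountable divisibility.
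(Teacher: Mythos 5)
Your proof is correct and follows essentially the same route as the paper, which obtains this theorem directly from Theorem 4 — the dichotomy that a divisible qubit channel is either infinitesimally divisible or factors as $e^\mcL\mcE_\text{indivisible}$ — combined with the infinite (uncountable) divisibility of the Markovian factor $e^{\mcL t}$. The one detail you make explicit that the paper leaves implicit is that $e^\mcL$ must be non-unitary when $\mcE$ is divisible (else composing a unitary with an indivisible channel would make $\mcE$ indivisible), which is exactly the right observation to ensure the $n+1$ factors are all nontrivial.
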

\par
\prlsection{Reduction of the ancilla size and its limitations}
Notice that the Lindblad-boundary decomposition can be used to reduce the size of the ancilla needed to simulate or implement experimentally arbitrary quantum channels. Since the boundary part is always Kraus rank deficient, the trick is to find a Kraus rank deficient Lindblad part, too.
For the qubit case, this size reduction of the ancilla can be enough even to dispense with a qubit. This is stated with the following theorem, which is proved in the appendix~\ref{app:thm6}.
\begin{theorem}[Divisibility in Kraus rank deficient channels]
Let $\mcE \in \cptp{}_2$ be a qubit channel---it is divisible in channels with at most Kraus rank $2$ if and only if $\mcE$ is infinitesimally divisible ($\mcE\in\cpDiv$); otherwise, at least one factoring channel has Kraus rank no less than $3$.
\end{theorem}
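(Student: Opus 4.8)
The plan is to prove the stated equivalence by treating the two implications separately, reading the closing ``otherwise'' clause as the contrapositive of the necessity direction. Necessity (a factorization into Kraus rank $\le 2$ channels forces $\mcE\in\cpDiv$) is short and rests only on closure properties of $\cpDiv$, whereas sufficiency ($\mcE\in\cpDiv$ produces such a factorization) carries the weight and is organized around the Lindblad--Boundary decomposition.

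For necessity, suppose $\mcE=\mcE_1\circ\cdots\circ\mcE_n$ with each factor of Kraus rank at most $2$. Every qubit channel of Kraus rank $\le 2$ is infinitesimally divisible (Theorem~19 of Ref.~\cite{cirac}), so $\mcE_j\in\cpDiv$ for all $j$. Since $\cpDiv$ is closed under composition---one concatenates the infinitesimal factorizations of the individual factors---and is topologically closed, the product $\mcE$ again lies in $\cpDiv$, and the same conclusion survives if the factorization is understood in the limiting sense. Contraposing yields precisely the ``otherwise'' statement: when $\mcE\notin\cpDiv$ no factorization into Kraus rank $\le 2$ channels can exist, so at least one factor must have Kraus rank $\ge 3$.

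For sufficiency, let $\mcE\in\cpDiv$ and invoke the Lindblad--Boundary decomposition $\mcE=e^{\mcL}\mcE_\text{boundary}$. The first step is to show that $\mcE_\text{boundary}$ is \emph{not} indivisible: otherwise $\mcE$ would be of the form $e^{\mcL}\mcE_\text{indivisible}$, which by the analysis underlying Theorem~\ref{thm:lb_qubit} characterizes channels outside $\cpDiv$, contradicting the hypothesis. By the boundary classification established for that theorem---every boundary qubit channel is either indivisible (Kraus rank $3$, diagonal Lorentz normal form) or infinitesimally divisible (Kraus rank $\le 2$)---it then follows that $\mcE_\text{boundary}$ already has Kraus rank at most $2$ and is an admissible factor. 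What remains is to decompose the Markovian part $e^{\mcL}$ into channels of Kraus rank $\le 2$.

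This last step is the main obstacle, because $e^{\mcL}$ need not be Kraus deficient: a purely dissipative generator can raise the Kraus rank of $e^{\mcL}$ to the maximal value $4$, and even a single jump operator generically yields $e^{\mcL_j\tau}$ of Kraus rank $4$ for every $\tau>0$, so a naive decomposition into short time steps does not produce factors of Kraus rank $\le 2$. The route I would take is to write $\mcL=\sum_j\mcL_j$ with each $\mcL_j$ carrying a single jump operator and to Trotterize, $e^{\mcL}=\lim_{N\to\infty}\big(\prod_j e^{\mcL_j/N}\big)^N$, reducing the problem to a single-jump propagator. The structural fact to exploit is that for a single jump operator the generator's Choi operator, restricted to the orthogonal complement of the maximally entangled vector $\ket{\Omega}$, has rank one; this rigidly constrains $e^{\mcL_j\tau}$ and should let each single-jump propagator be expressed---possibly only in a limiting sense---as a composition of two Kraus rank $\le 2$ channels, as one sees explicitly for thermalizing amplitude damping, whose constant fixed-point channel factors into two amplitude-damping maps. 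Feeding these factorizations back through the Trotter limit writes $e^{\mcL}$, and hence $\mcE=e^{\mcL}\mcE_\text{boundary}$, as a limit of products of Kraus rank $\le 2$ channels, consistent with $\mcE\in\cpDiv$. The delicate points left for the supplement are exactly this single-jump factorization lemma and the control of Kraus ranks through the Trotter limit.
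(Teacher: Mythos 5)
Your necessity direction is fine and coincides with the paper's: all Kraus rank $\le 2$ qubit channels lie in \cpDiv{} (Theorem 19 of Ref.~\cite{cirac}), and \cpDiv{} is closed under composition, so a factorization into rank $\le 2$ pieces forces $\mcE\in\cpDiv$; contraposition gives the ``otherwise'' clause. The sufficiency direction, however, has a genuine gap exactly where you flag it. Your plan reduces everything to factoring $e^{\mcL}$ into Kraus rank $\le 2$ channels, and that step is only conjectured: the ``single-jump factorization lemma'' is not proved, and even if it were, the Trotterization $e^{\mcL}=\lim_{N\to\infty}\bigl(\prod_j e^{\mcL_j/N}\bigr)^N$ would deliver $\mcE$ only as a \emph{limit} of products of rank $\le 2$ channels, not as a finite concatenation, which is what divisibility means throughout the paper ($\mcE=\mcE_1\circ\cdots\circ\mcE_n$). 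There is also a smaller slip: from the boundary classification you may only conclude that $\mcE_\text{boundary}$ is infinitesimally divisible, not that it has Kraus rank $\le 2$ --- it can be a Kraus rank $3$ channel with non-diagonal Lorentz normal form. That case is repairable (such channels split into two rank $\le 2$ pieces by Theorem 3), but as written the inference is wrong.

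The paper avoids the $e^{\mcL}$ obstacle entirely by working with the Lorentz normal form rather than the Lindblad--Boundary decomposition. By Theorem 17 of Ref.~\cite{cirac} one may replace $\mcE$ by its normal form $\mcM$. If $\mcE$ has Kraus rank $3$ and is in \cpDiv{}, then $\mcM$ is non-diagonal and splits into two rank $\le 2$ factors via Theorem 3. If $\mcE$ has Kraus rank $4$ and is non-singular, Theorem 4 of Ref.~\cite{davalosdivisibility} gives $\hat\mcM=\diag(1,\eta_1,\eta_2,\eta_3)$ with $\eta_i\ge\eta_j\eta_k$, and setting $\eta_i=\lambda_j\lambda_k$ yields the explicit finite factorization $\diag(1,1,\lambda_1,\lambda_1)\,\diag(1,\lambda_2,1,\lambda_2)\,\diag(1,\lambda_3,\lambda_3,1)$ into three flip channels, each of Kraus rank $2$; the singular full-rank case reduces to $\diag(1,0,0,\lambda)=\diag(1,1,\lambda,\lambda)\,\diag(1,0,0,1)$. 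This is precisely the constructive input your argument is missing: it produces the finite, explicit rank $\le 2$ factorization of the ``Markovian bulk'' that your Trotter limit only approximates. To salvage your route you would need to prove that every qubit channel of the form $e^{\mcL}$ admits a finite factorization into Kraus rank $\le 2$ channels --- which, for Pauli-diagonal generators, is exactly the three-flip identity above, so you would end up reproducing the paper's computation anyway.
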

Therefore, for infinitesimally divisible qubit channels we can use a smaller quantum register to implement it in a quantum computer (two qubits instead of three, one for the system and one for the ancilla). We constructed such circuit and computed its quantum process tomography in an IBM \textit{falcon r4T} quantum processor~\cite{qiskit,python}; the averaged fidelity obtained is almost equal to $1$, and ten trials are shown in~\fref{fig:circuit}. See the appendixes~\ref{app:qc1} and~\ref{app:qc2} for further details.
\begin{figure}
\centering
\includegraphics[width=0.98\columnwidth]{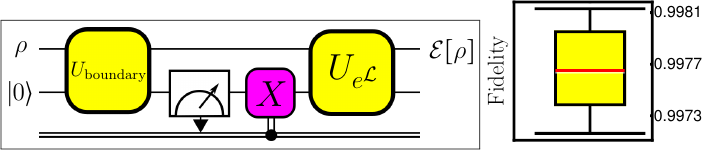}
\caption{Circuit with a two-qubit register and a bit to implement infinitesimally divisible qubit channels ($\cpDiv{}$) in a quantum computer (left). Box plot of fidelities between the computed Choi matrix (in an IBM \textit{falcon r4T} quantum processor) and the theoretical one of the channel $\diag(1,0,0,3/4)$ (in the Pauli basis) (right). The red line indicates the average. The quantum process tomography was done $10$ times; in each of them, there were performed the $12$ independent experiments needed, $20\times 10^3$ times each. See the appendixes~\ref{app:qc1} and~\ref{app:qc2} for further details.\label{fig:circuit}}
\end{figure}
\par
\prlsection{Summary and outlook}
We proved that strictly $n$-divisible full Kraus rank channels do not exist for $n \geq 2$ (Theorem~\ref{thm:full_rank}). For the case of qubits, this nonexistence applies to all channels (Theorem~\ref{thm:qubits_non_existence}), thus, divisibility always implies divisibility in an arbitrary number of channels. For the general case, the question remains open for the channels from the boundary between completely positive and noncompletely positive trace-preserving maps. This suggests that the analogy between integer factorization and channel divisibility, that motivates this investigation, leads to significant differences. Moreover, the concept of constructing channels from ``prime'' channels is qualitatively different from the case of integers. Most likely, we will meet with the same situation if we consider the case of classical channels. However, it would be interesting to see whether there are some differences between classical and quantum channels from the perspective of the divisibility structures.

In order to obtain these results, we introduced and investigated a novel Lindblad-boundary decomposition that resembles polar decomposition but including a dissipation term (Theorem~\ref{thm:main}). To ensure the validity of the results for the qubit case, we patched a theorem from the literature and fixed the proof of the characterization of indivisible channels (Theorem~\ref{theorem2}).
The novel decomposition allow us to construct methods to reduce the size of the ancilla needed to simulate quantum channels in a quantum computer. Moreover, we believe that these results constitute a useful tool for the analysis of quantum channels and motivates further foundational studies of the structural questions on the dynamics of open quantum systems.
\begin{acknowledgements}
\prlsection{Acknowledgments} Support by the projects OPTIQUTE APVV-18-0518, CONACyT 285754, DESCOM VEGA-2/0183/21, and Štefan Schwarz Support Fund is acknowledged, as well conversations with Carlos Pineda, Thomas Gorin, Sergey Filippov, and Tomás Basile. 
\end{acknowledgements}
\clearpage
\begin{figure}[h!]
\centering
\includegraphics[scale=0.2]{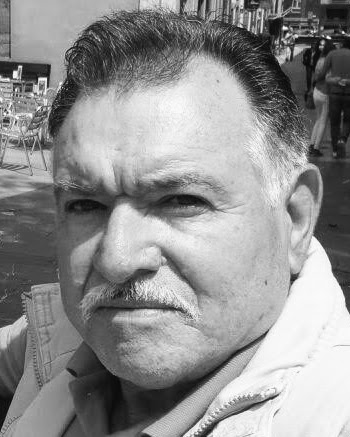}
\end{figure}
\parbox[c]{7cm}{
\textit{\begin{center}
(From David Davalos) I dedicate this letter to the memory of my Father; not only in the sense of remembrance, but also in the sense of his own memories and beautiful mind, which exist somewhere.
\end{center}}}
\vspace{0.7cm}
\bibliographystyle{apsrev4-1}
\bibliography{labibliografia}

\begin{thebibliography}{19}%
\makeatletter
\providecommand \@ifxundefined [1]{%
 \@ifx{#1\undefined}
}%
\providecommand \@ifnum [1]{%
 \ifnum #1\expandafter \@firstoftwo
 \else \expandafter \@secondoftwo
 \fi
}%
\providecommand \@ifx [1]{%
 \ifx #1\expandafter \@firstoftwo
 \else \expandafter \@secondoftwo
 \fi
}%
\providecommand \natexlab [1]{#1}%
\providecommand \enquote  [1]{``#1''}%
\providecommand \bibnamefont  [1]{#1}%
\providecommand \bibfnamefont [1]{#1}%
\providecommand \citenamefont [1]{#1}%
\providecommand \href@noop [0]{\@secondoftwo}%
\providecommand \href [0]{\begingroup \@sanitize@url \@href}%
\providecommand \@href[1]{\@@startlink{#1}\@@href}%
\providecommand \@@href[1]{\endgroup#1\@@endlink}%
\providecommand \@sanitize@url [0]{\catcode `\\12\catcode `\$12\catcode
  `\&12\catcode `\#12\catcode `\^12\catcode `\_12\catcode `\%12\relax}%
\providecommand \@@startlink[1]{}%
\providecommand \@@endlink[0]{}%
\providecommand \url  [0]{\begingroup\@sanitize@url \@url }%
\providecommand \@url [1]{\endgroup\@href {#1}{\urlprefix }}%
\providecommand \urlprefix  [0]{URL }%
\providecommand \Eprint [0]{\href }%
\providecommand \doibase [0]{http://dx.doi.org/}%
\providecommand \selectlanguage [0]{\@gobble}%
\providecommand \bibinfo  [0]{\@secondoftwo}%
\providecommand \bibfield  [0]{\@secondoftwo}%
\providecommand \translation [1]{[#1]}%
\providecommand \BibitemOpen [0]{}%
\providecommand \bibitemStop [0]{}%
\providecommand \bibitemNoStop [0]{.\EOS\space}%
\providecommand \EOS [0]{\spacefactor3000\relax}%
\providecommand \BibitemShut  [1]{\csname bibitem#1\endcsname}%
\let\auto@bib@innerbib\@empty
\bibitem [{\citenamefont {Wolf}\ and\ \citenamefont {Cirac}(2008)}]{cirac}%
  \BibitemOpen
  \bibfield  {author} {\bibinfo {author} {\bibfnamefont {M.~M.}\ \bibnamefont
  {Wolf}}\ and\ \bibinfo {author} {\bibfnamefont {J.~I.}\ \bibnamefont
  {Cirac}},\ }\href {\doibase 10.1007/s00220-008-0411-y} {\bibfield  {journal}
  {\bibinfo  {journal} {Comm. Math. Phys.}\ }\textbf {\bibinfo {volume}
  {279}},\ \bibinfo {pages} {147} (\bibinfo {year} {2008})}\BibitemShut
  {NoStop}%
\bibitem [{\citenamefont {Heinosaari}\ and\ \citenamefont
  {Ziman}(2012)}]{zimansbook}%
  \BibitemOpen
  \bibfield  {author} {\bibinfo {author} {\bibfnamefont {T.}~\bibnamefont
  {Heinosaari}}\ and\ \bibinfo {author} {\bibfnamefont {M.}~\bibnamefont
  {Ziman}},\ }\href {\doibase 10.1017/CBO9781139031103} {\emph {\bibinfo
  {title} {{The Mathematical Language of Quantum Theory: From Uncertainty to
  Entanglement}}}}\ (\bibinfo  {publisher} {Cambridge University Press},\
  \bibinfo {year} {2012})\BibitemShut {NoStop}%
\bibitem [{\citenamefont {\'{A}ngel Rivas}\ \emph {et~al.}(2014)\citenamefont
  {\'{A}ngel Rivas}, \citenamefont {Huelga},\ and\ \citenamefont
  {Plenio}}]{rivasreview}%
  \BibitemOpen
  \bibfield  {author} {\bibinfo {author} {\bibnamefont {\'{A}ngel Rivas}},
  \bibinfo {author} {\bibfnamefont {S.~F.}\ \bibnamefont {Huelga}}, \ and\
  \bibinfo {author} {\bibfnamefont {M.~B.}\ \bibnamefont {Plenio}},\ }\href
  {\doibase 10.1088/0034-4885/77/9/094001} {\bibfield  {journal} {\bibinfo
  {journal} {Rep. Prog. Phys.}\ }\textbf {\bibinfo {volume} {77}},\ \bibinfo
  {pages} {094001} (\bibinfo {year} {2014})}\BibitemShut {NoStop}%
\bibitem [{\citenamefont {{Wilde}}(2011)}]{wildebook}%
  \BibitemOpen
  \bibfield  {author} {\bibinfo {author} {\bibfnamefont {M.~M.}\ \bibnamefont
  {{Wilde}}},\ }\href@noop {} {\bibfield  {journal} {\bibinfo  {journal} {ArXiv
  e-prints}\ } (\bibinfo {year} {2011})},\ \Eprint
  {http://arxiv.org/abs/1106.1445} {arXiv:1106.1445 [quant-ph]} \BibitemShut
  {NoStop}%
\bibitem [{\citenamefont {Stinespring}(1955)}]{Stinespring2006}%
  \BibitemOpen
  \bibfield  {author} {\bibinfo {author} {\bibfnamefont {W.~F.}\ \bibnamefont
  {Stinespring}},\ }\href {\doibase 10.2307/2032342} {\bibfield  {journal}
  {\bibinfo  {journal} {Proceedings of the American Mathematical Society}\
  }\textbf {\bibinfo {volume} {6}},\ \bibinfo {pages} {211} (\bibinfo {year}
  {1955})}\BibitemShut {NoStop}%
\bibitem [{\citenamefont {Lindblad}(1976)}]{lindblad}%
  \BibitemOpen
  \bibfield  {author} {\bibinfo {author} {\bibfnamefont {G.}~\bibnamefont
  {Lindblad}},\ }\href {\doibase 10.1007/BF01608499} {\bibfield  {journal}
  {\bibinfo  {journal} {Comm. Math. Phys.}\ }\textbf {\bibinfo {volume} {48}},\
  \bibinfo {pages} {119} (\bibinfo {year} {1976})}\BibitemShut {NoStop}%
\bibitem [{\citenamefont {Gorini}\ \emph {et~al.}(1976)\citenamefont {Gorini},
  \citenamefont {Kossakowski},\ and\ \citenamefont {Sudarshan}}]{Gorini1976}%
  \BibitemOpen
  \bibfield  {author} {\bibinfo {author} {\bibfnamefont {V.}~\bibnamefont
  {Gorini}}, \bibinfo {author} {\bibfnamefont {A.}~\bibnamefont {Kossakowski}},
  \ and\ \bibinfo {author} {\bibfnamefont {E.~C.~G.}\ \bibnamefont
  {Sudarshan}},\ }\href {\doibase 10.1063/1.522979} {\bibfield  {journal}
  {\bibinfo  {journal} {J. Math. Phys.}\ }\textbf {\bibinfo {volume} {17}},\
  \bibinfo {pages} {821} (\bibinfo {year} {1976})}\BibitemShut {NoStop}%
\bibitem [{\citenamefont {Kossakowski}(1972{\natexlab{a}})}]{kossa}%
  \BibitemOpen
  \bibfield  {author} {\bibinfo {author} {\bibfnamefont {A.}~\bibnamefont
  {Kossakowski}},\ }\href {\doibase 10.1016/0034-4877(72)90010-9} {\bibfield
  {journal} {\bibinfo  {journal} {Rep. Math. Phys.}\ }\textbf {\bibinfo
  {volume} {3}},\ \bibinfo {pages} {247 } (\bibinfo {year}
  {1972}{\natexlab{a}})}\BibitemShut {NoStop}%
\bibitem [{\citenamefont {Kossakowski}(1972{\natexlab{b}})}]{kossa2}%
  \BibitemOpen
  \bibfield  {author} {\bibinfo {author} {\bibfnamefont {A.}~\bibnamefont
  {Kossakowski}},\ }\href@noop {} {\bibfield  {journal} {\bibinfo  {journal}
  {Bull. Acad. Pol. Sci.}\ }\textbf {\bibinfo {volume} {20}},\ \bibinfo {pages}
  {1021} (\bibinfo {year} {1972}{\natexlab{b}})}\BibitemShut {NoStop}%
\bibitem [{\citenamefont {Evans}\ and\ \citenamefont
  {Lewis}(1977)}]{Evans1977}%
  \BibitemOpen
  \bibfield  {author} {\bibinfo {author} {\bibfnamefont {D.~E.}\ \bibnamefont
  {Evans}}\ and\ \bibinfo {author} {\bibfnamefont {J.~T.}\ \bibnamefont
  {Lewis}},\ }\href {http://orca.cf.ac.uk/34031/} {\emph {\bibinfo {title}
  {Dilations of Irreversible Evolutions in Algebraic Quantum Theory}}},\
  \bibinfo {series} {Communications of the Dublin Institute for Advanced
  Studies: Theoretical physics}, Vol.~\bibinfo {volume} {24}\ (\bibinfo
  {publisher} {Dublin Institute for Advanced Studies},\ \bibinfo {year}
  {1977})\BibitemShut {NoStop}%
\bibitem [{\citenamefont {Davalos}\ \emph {et~al.}(2019)\citenamefont
  {Davalos}, \citenamefont {Ziman},\ and\ \citenamefont
  {Pineda}}]{davalosdivisibility}%
  \BibitemOpen
  \bibfield  {author} {\bibinfo {author} {\bibfnamefont {D.}~\bibnamefont
  {Davalos}}, \bibinfo {author} {\bibfnamefont {M.}~\bibnamefont {Ziman}}, \
  and\ \bibinfo {author} {\bibfnamefont {C.}~\bibnamefont {Pineda}},\ }\href
  {\doibase 10.22331/q-2019-05-20-144} {\bibfield  {journal} {\bibinfo
  {journal} {Quantum}\ }\textbf {\bibinfo {volume} {3}},\ \bibinfo {pages}
  {144} (\bibinfo {year} {2019})},\ \Eprint {http://arxiv.org/abs/1812.11437}
  {arXiv:1812.11437} \BibitemShut {NoStop}%
\bibitem [{\citenamefont {Ruskai}\ \emph {et~al.}(2002)\citenamefont {Ruskai},
  \citenamefont {Szarek},\ and\ \citenamefont {Werner}}]{ruskai}%
  \BibitemOpen
  \bibfield  {author} {\bibinfo {author} {\bibfnamefont {M.~B.}\ \bibnamefont
  {Ruskai}}, \bibinfo {author} {\bibfnamefont {S.}~\bibnamefont {Szarek}}, \
  and\ \bibinfo {author} {\bibfnamefont {E.}~\bibnamefont {Werner}},\ }\href
  {\doibase 10.1016/S0024-3795(01)00547-X} {\bibfield  {journal} {\bibinfo
  {journal} {Lin. Alg. Appl.}\ }\textbf {\bibinfo {volume} {347}},\ \bibinfo
  {pages} {159 } (\bibinfo {year} {2002})}\BibitemShut {NoStop}%
\bibitem [{\citenamefont {Choi}(1975)}]{choi}%
  \BibitemOpen
  \bibfield  {author} {\bibinfo {author} {\bibfnamefont {M.-D.}\ \bibnamefont
  {Choi}},\ }\href {\doibase http://dx.doi.org/10.1016/0024-3795(75)90075-0}
  {\bibfield  {journal} {\bibinfo  {journal} {Linear Algebra and its
  Applications}\ }\textbf {\bibinfo {volume} {10}},\ \bibinfo {pages} {285 }
  (\bibinfo {year} {1975})}\BibitemShut {NoStop}%
\bibitem [{\citenamefont {Nielsen}\ and\ \citenamefont
  {Chuang}(2011)}]{chuangbook}%
  \BibitemOpen
  \bibfield  {author} {\bibinfo {author} {\bibfnamefont {M.~A.}\ \bibnamefont
  {Nielsen}}\ and\ \bibinfo {author} {\bibfnamefont {I.~L.}\ \bibnamefont
  {Chuang}},\ }\href@noop {} {\emph {\bibinfo {title} {Quantum Computation and
  Quantum Information: 10th Anniversary Edition}}},\ \bibinfo {edition} {10th}\
  ed.\ (\bibinfo  {publisher} {Cambridge University Press},\ \bibinfo {address}
  {New York, NY, USA},\ \bibinfo {year} {2011})\BibitemShut {NoStop}%
\bibitem [{\citenamefont {Verstraete}\ \emph {et~al.}(2001)\citenamefont
  {Verstraete}, \citenamefont {Dehaene},\ and\ \citenamefont
  {DeMoor}}]{Verstraete2001}%
  \BibitemOpen
  \bibfield  {author} {\bibinfo {author} {\bibfnamefont {F.}~\bibnamefont
  {Verstraete}}, \bibinfo {author} {\bibfnamefont {J.}~\bibnamefont {Dehaene}},
  \ and\ \bibinfo {author} {\bibfnamefont {B.}~\bibnamefont {DeMoor}},\ }\href
  {\doibase 10.1103/PhysRevA.64.010101} {\bibfield  {journal} {\bibinfo
  {journal} {Phys. Rev. A}\ }\textbf {\bibinfo {volume} {64}},\ \bibinfo
  {pages} {010101} (\bibinfo {year} {2001})}\BibitemShut {NoStop}%
\bibitem [{\citenamefont {Verstraete}\ and\ \citenamefont
  {Verschelde}(2002)}]{Verstraete2002}%
  \BibitemOpen
  \bibfield  {author} {\bibinfo {author} {\bibfnamefont {F.}~\bibnamefont
  {Verstraete}}\ and\ \bibinfo {author} {\bibfnamefont {H.}~\bibnamefont
  {Verschelde}},\ }\href {http://arxiv.org/abs/quant-ph/0202124} {\bibfield
  {journal} {\bibinfo  {journal} {Unpublished}\ } (\bibinfo {year} {2002})},\
  \Eprint {http://arxiv.org/abs/0202124} {arXiv:0202124 [quant-ph]}
  \BibitemShut {NoStop}%
\bibitem [{\citenamefont {ANIS}\ \emph {et~al.}(2021)\citenamefont {ANIS},
  \citenamefont {Abby-Mitchell}, \citenamefont {Abraham}, \citenamefont
  {AduOffei}, \citenamefont {Agarwal},\ and\ \citenamefont {et~al.}}]{qiskit}%
  \BibitemOpen
  \bibfield  {author} {\bibinfo {author} {\bibfnamefont {M.~S.}\ \bibnamefont
  {ANIS}}, \bibinfo {author} {\bibnamefont {Abby-Mitchell}}, \bibinfo {author}
  {\bibfnamefont {H.}~\bibnamefont {Abraham}}, \bibinfo {author} {\bibnamefont
  {AduOffei}}, \bibinfo {author} {\bibfnamefont {R.}~\bibnamefont {Agarwal}}, \
  and\ \bibinfo {author} {\bibfnamefont {G.~A.}\ \bibnamefont {et~al.}},\
  }\href {\doibase 10.5281/zenodo.2573505} {\enquote {\bibinfo {title} {Qiskit:
  An open-source framework for quantum computing},}\ } (\bibinfo {year}
  {2021})\BibitemShut {NoStop}%
\bibitem [{\citenamefont {Van~Rossum}\ and\ \citenamefont
  {Drake}(2009)}]{python}%
  \BibitemOpen
  \bibfield  {author} {\bibinfo {author} {\bibfnamefont {G.}~\bibnamefont
  {Van~Rossum}}\ and\ \bibinfo {author} {\bibfnamefont {F.~L.}\ \bibnamefont
  {Drake}},\ }\href@noop {} {\emph {\bibinfo {title} {Python 3 Reference
  Manual}}}\ (\bibinfo  {publisher} {CreateSpace},\ \bibinfo {address} {Scotts
  Valley, CA},\ \bibinfo {year} {2009})\BibitemShut {NoStop}%
\bibitem [{Note1()}]{Note1}%
  \BibitemOpen
  \bibinfo {note} {Quantum operations are defined to be completely positive
  non-increasing trace linear maps over $\protect \mathcal {B}({\protect \sf
  H})$, i.e.{} $\protect \mathcal {E}$ is CP with $\mathop {\protect \mathrm
  {tr}}\nolimits \protect \mathcal {E}[X]\leq \mathop {\protect \mathrm
  {tr}}\nolimits X \ \forall X\in \protect \mathcal {B}({\protect \sf
  H})$}\BibitemShut {NoStop}%
\end{thebibliography}%
\appendix
\section{Proof of Theorem 1}
\label{app:thm1}
\begin{theorem1}[Lindblad-boundary decomposition]
Any $\mcE \in \cptpd{d}$ can be written as follows,
\begin{equation}
\mcE=e^\mcL \mcE_\text{boundary},
\label{eq:main_formula}
\end{equation} 
where $\mcE_\text{boundary}$ is a channel in the boundary between $\cptp_d$ and trace-preserving maps, \ie{} it has Kraus rank less than $d^2$~\cite{zimansbook}, with $d=\dim \mcH$, and $\mcL$ is a Lindbladian.
\end{theorem1}
\begin{proof}
For non-singular channels the proof is straightforward using the properties of the determinant, in particular we stress that if $\mcE \in \cptpd{d}$ then $|\det \mcE |\leq 1$~\cite{cirac}. 
Thus, let $\mcE \in \cptp_d$ with $|\det \mcE| >0 $ and define the family $\mcF_t=e^{-\mcL t} \mcE$ with $t\in \mathbb{R}^+$ where $\mcL$ is \textit{any} Lindblad operator. Now let us compute the determinant $\det \mcF_t=\det e^{-t \mcL} \det \mcE=e^{t d\tr G} \det \mcE$. It is clear that $|\det \mcF_t|>1$ for $t> -(d \tr G)^{-1} \log \left(|\det \mcE |\right)=:t'$, therefore $\mcF_t$ is no longer inside $\cptpd{d}$ for $t> t'$. Notice that this does not imply that $t'$ is the transition time outside $\cptpd{d}$ as the condition $|\det \mcF_t|>1$ is only sufficient. The exact transition time can be computed for each map using \Jami{} isomorphism~\cite{choi}.
%
Since $\mcF_t$ parameterizes a continuous curve inside trace-preserving maps with $\mcF_0=\mcE$ due to the continuity of $e^{-t \mcL}$, then $\mcF_t$ crosses the boundary of $\cptpd{d}$ once, otherwise $e^{t \mcL}$ could not be a valid channel for all $t \geq 0$, which is a contradiction. Therefore there exists some $t_\text{min}>0$ such that $\mcE_\text{boundary}:=\mcF_{t_\text{min}}$. 
Since $e^{-t_\text{min} \mcL}$ is invertible given that $t_\text{min}$ is finite, and  $t_\text{min}\mcL$ is a Lindblad operator, we proved the desired relation for invertible channels.
For singular channels we cannot longer take advantage of the properties of determinant. To prove this case, a suitable Lindblad generator must be found. Consider the following continuous family of quantum channels,
\begin{equation}
\mcE_{\psi,t}[\Delta]:=e^{-t \mu} \Delta +(1-e^{- t \mu}) \proj{\psi}{\psi} \tr \Delta 
\label{eq:psi_map}
\end{equation}
 for all $\Delta \in \mcB(\mcH)$ with $\mu>0$ and $t\geq 0$. It is easy to prove that $\mcE_{\psi,t_1}\mcE_{\psi, t_2}=\mcE_{\psi, t_1+t_2}$, \ie{} $\mcE_{\psi,t}=e^{t \mcL_\psi}$~\cite{lindblad} with $\mcL_\psi[\Delta]= \mu \left( \proj{\psi}{\psi} \tr \Delta -\Delta \right)$ the Lindblad generator. We then define the family $\mcF_{\psi,t} := e^{- t \mcL_\psi} \mcE$ for a given channel $\mcE$ (not necessarily singular). Notice that $\mcF_{\psi,t}\neq \mcE$ unless $\mcE[\Delta]=\proj{\psi}{\psi}\tr\Delta$, thus, we can always choose $\ket{\psi}$ such that the family $\mcF_{\psi,t}$ is non-trivial. Consider now a density matrix $\sigma$ such that $\varrho=\mcE[\sigma]\neq \proj{\psi}{\psi}$, and define $\varrho':=\mcF_{\psi,t}[\sigma]=e^{t \mu} \varrho +(1-e^{t \mu}) \proj{\psi}{\psi}$, by construction $\varrho$ is a density matrix. Now we compute $\bra{\psi}\varrho' \ket{\psi}=e^{t \mu} \left( \bra{\psi}\varrho \ket{\psi}-1 \right)+1$ and observe that $\bra{\psi}\varrho'\ket{\psi} <0$ for $t > -\mu^{-1}\log\left(1- \bra{\psi}\varrho \ket{\psi}\right)=:t'>0$, provided that $0\leq\bra{\psi}\varrho \ket{\psi}<1$. Therefore $\mcF_{\psi,t}$ is not even positive for $t>t'$. Following  similar arguments from the non-singular case, we can find some $t_\text{min}$ such that $\mcE=e^{\mcL}\mcE_\text{boundary}$ with $\mcL:=t_\text{min}\mcL_\psi$ and $\mcE_\text{boundary}:=\mcF_{\psi,t_\text{min}}$.
\end{proof}

\section{Example: Lindblad-boundary decomposition for completely depolarizing channel}
\label{app:example}
Following the same steps used in the proof of Theorem 1, we define the continuous family $\mcF_{\psi, t}=e^{-t \mcL_\psi}\mcN$. The task now is to determine the time $t_\text{min}$ when the curve crosses the boundary of $\cptpd{d}$, to do so we use the \Jami{} representation of $\mcF_{\psi, t}$, it is given by $\tau_\mcF=\left( \id \otimes e^{-t \mcL_\psi}\mcN \right)[\omega]=e^{t \mu} \one_{d^2}/d^2+(1-e^{\mu t})\left( \one_d/d\right)\otimes \proj{\psi}{\psi}$, where $\omega$ is the projector of a Bell state between the system and a copy of it. $\mcF$ is completely positive if and only if $\tau_\mcF \geq 0$~\cite{choi}, since it is hermitian, $t_\text{min}$ coincides with the time for the smaller eigenvalue to be negative. Computing the spectral decomposition of $\tau_\mcF$ is straightforward if one writes $\one_{d^2}$ using an orthogonal separable basis $\left\{\ket{\phi_i}\otimes \ket{\psi_j} \right\}_{i,j=1}^{d}$ with $\ket{\psi_1}:=\ket{\psi}$. Thus, the eigenvalues of $d\tau_\mcF$ are $e^{\mu t} \left( 1/d-1 \right)$ and $e^{\mu t}/d$, with multiplicities $d$ and $d^2-d$, respectively. Therefore $t_\text{min}=\mu^{-1}\log\left(d/(d-1)\right)$ and $\mcE_\text{boundary}$ has Kraus rank $d^2-d$. Let us illustrate this result for the qubit using the Pauli basis, $1/\sqrt{2}\left\{ \sigma_0,\sigma_x,\sigma_y,\sigma_z\right\}$ with $\sigma_0:=\one_2$. Thus, the matrix components of a channel $\mcE$ are $\hat \mcE_{i,j}=\frac{1}{2}\tr{\sigma_i\mcE[\sigma_j}]$. In this basis $\hat \mcN =\diag \left( 1,0,0,0 \right)$, and the decomposition is
\begin{equation}
\hat \mcN=\underbrace{\left(
\begin{array}{cccc}
 1 & 0 & 0 & 0 \\
 0 & \frac{1}{2} & 0 & 0 \\
 0 & 0 & \frac{1}{2} & 0 \\
 \frac{1}{2} & 0 & 0 & \frac{1}{2} \\
\end{array}
\right)}_{e^{t_{\min} \hat \mcL_{\psi}}}
\underbrace{\left(
\begin{array}{cccc}
 1 & 0 & 0 & 0 \\
 0 & 0 & 0 & 0 \\
 0 & 0 & 0 & 0 \\
 -1 & 0 & 0 & 0 \\
\end{array}
\right)}_{\hat \mcE_\text{boundary}}
\end{equation}
where we have chosen $\ket{\psi}=\ket{0}$ with $\sigma_z \ket{0}=\ket{0}$. Observe that $\mcE_\text{boundary}=\lim_{t\to \infty}\mcE_{\psi,t}$ (taking $\ket{\psi}=\ket{1}$ with $\sigma_z\ket{1}=-\ket{1}$), therefore $\mcE_\text{boundary}$ lies in the boundary of \LDiv{} for this case.

\section{Proof of Proposition 1}
\label{app:prop1}
\begin{proposition1}[Pure dissipative choice]
For any channel the Lindblad-Boundary (LB) decomposition can always be performed by choosing the Lindbladian $\mcL$ to be  purely dissipative, \ie{} $\mcL[\Delta]=\sum_{\alpha, \beta=0}^{d^2-1} G_{\alpha \beta}\left(F_\alpha \Delta F_\beta^\dagger{}- \frac{1}{2} \left\{ F_\beta^\dagger{} F_\alpha,\Delta \right \} \right)$.
\end{proposition1} 
\begin{proof}
For the non-singular channels only $G$ participates in the proof of Theorem 1, thus, we can always simply omit the Hamiltonian part of any given generator. On the other hand, for the non-singular channels a suitable pure dissipative generator must be found.
Let $d$ be the dimension of the system's Hilbert space and $\left \{ \ket{i} \right\}_{i=0}^{d-1}$ an orthonormal basis, and define the family of pure dissipative Lindblad generators that model spontaneous decay of energy levels described by states ranging from $\ket{1}$ to $\ket{d-1}$, to the level $\ket{0}$, with decaying ratios $\gamma_1,\dots \gamma_{d-1}$. Thus,  
\begin{equation}
\mcL_{\text{AD}}[\Delta]=\sum_{i=1}^{d-1} \gamma_i \left(F_i \Delta F_i^\dagger{}-\frac{1}{2} \left\{ F_i^\dagger{} F_i,\Delta \right\}\right),
\label{eq:amplitude_damping}
\end{equation}
with $F_i=\proj{0}{i}$, \ie{} $\tr F_i=0$, so the Hamiltonian part is null. Now, similar to the proof of Theorem 1 we define the one-parametric family $\mcF_t^\text{AD}=e^{-t \mcL_\text{AD}} \mcE$. Now let $\varrho$ be a density matrix such that $\varrho=\mcE[\sigma]$ with $\sigma$ some density matrix, assume  the spectral decomposition $\varrho=\sum_{k=0}^{d-1} \lambda_k\proj{k}{k}$. We then choose the basis that defines $F_i$ as the eigenbasis of $\varrho$, \ie{} $F_i\varrho=\lambda_i \proj{0}{i}$ and $\proj{0}{0}\neq \varrho$ (in case of $\varrho$ being pure) and compute the following,
\begin{equation*}
\bra{0} e^{-t \mcL_\text{AD}} \left[\proj{k}{k}\right] \ket{0} =
\begin{cases}
1-e^{\gamma_k t} & \text{for }k>0\\
1 & \text{for } k=0.
\end{cases}
\end{equation*}
Therefore $\bra{0} \mcF_t^\text{AD}[\sigma]\ket{0}=\bra{0}e^{-t \mcL_\text{AD}}[\varrho]\ket{0}=\lambda_0+\sum_{k=1}^{d-1} \lambda_k \left( 1-e^{\gamma_k t} \right)$, it is clear that there exist a large enough $t'$ such that $\bra{0} \mcF_{t'}^\text{AD}[\sigma]\ket{0}<0$, \ie{} $\mcF_{t'}^\text{AD}$ is not even positive. Therefore, using similar arguments as for the non-singular case of Theorem 1, there exist some $t_\text{min}$ such that $\mcE_\text{boundary}:=\mcF_{t_\text{min}}^\text{AD}$ and $\mcE=e^{\mcL}\mcE_\text{boundary}$ with $\mcL:=t_\text{min} \mcL_\text{AD}$ pure dissipative.
\end{proof}

\section{Proof of theorem 3}
\label{app:thm3}
\begin{theorem3}[Lorentz normal forms for qubit channels]
Given a qubit quantum channel, there exist two Kraus rank one (not necessarily non-increasing trace) linear maps over $ \mcB(\mcH)$, $\mcT_1$ and $\mcT_2$ such that $\mcE = \mcT_2 \mcM \mcT_1$, where $\mcM$ is one of the following forms (in the Pauli basis):
\begin{enumerate}
\item $\hat \mcM$ is diagonal, \ie{} $ \mcM$ is a Pauli channel.
\item $\hat \mcM$ is non-diagonal and has the following form,
 \begin{equation}
\hat \mcM(v,x,z):=\left(
\begin{array}{cccc}
 1 & 0 & 0 & z \\
 0 & x f(v,z) & 0 & 0 \\
 0 & 0 & x f(v,z) & 0 \\
 v & 0 & 0 & v-z+1 \\
\end{array}
\right),
\end{equation}
\end{enumerate}
with $f(v,z)=\sqrt{1+v-z-vz}$. 
For the latter, the following cases can be identified,
\begin{enumerate}[i.)]
\item $z\in [0,1)$, $v\in (-1,z]$, $x\in (-1,1)$ (Kraus rank 3 for $v<z$ and Kraus rank 2 for $v=z$).
\item $z\in [0,1)$, $v\in (-1,z]$, $x\in \{-1,1\}$ (Kraus rank 2 for $v<z$ and Kraus rank 1 for $v=z$).
\item $z\in [0,1)$, $v=-1$ and $x=0$ (Kraus rank 2 for $z>-1$ and Kraus rank 1 for $z=-1$).
\item $z=1$, $x=0$, $v\in [-1,1]$ (Kraus rank 2 for $|v|<1$ and Kraus rank 1 for $|v|=1$).
\end{enumerate}
\end{theorem3}

\begin{proof}
Let $\tau_\mcE=\frac{1}{4}\sum_{i,j=0}^3 R_{ij}\sigma_i \otimes \sigma_j$ be the \Jami{} state of $\mcE$, and let $\hat \mcE$ be the matrix of $\mcE$ in the Pauli basis and $R$ the matrix formed with coefficients $R_{ij}$, then the identity $ R=\hat\mcE \Phi_\text{T} $, with $\Phi_\text{T}=\diag{\left(1,1,-1,1 \right)}$, holds~\cite{davalosdivisibility}. According to Theorem 3 in Ref.~\cite{Verstraete2001}, we can write $R=L_2 \Sigma L^\text{T}_1$ where $L_{1,2}$ are proper orthochronous Lorentz transformations, corresponding to stochastic local operations~\footnote{Quantum operations are defined to be completely positive non-increasing trace linear maps over $\mcB(\mcH)$, \ie{} $\mcE$ is CP with $\tr\mcE[X]\leq \tr X \ \forall X\in \mcB(\mcH)$} and classical communication at the level of $\tau_\mcE$, and $\Sigma$ has one of the following forms,
\begin{equation}
\Sigma_1 =\diag(s_0,s_1,s_2,s_3), \ \ 
\Sigma_2 =\left(\begin{array}{cccc}
a & 0 & 0 & b \\ 
0 & d & 0 & 0 \\ 
0 & 0 & -d & 0 \\ 
c & 0 & 0 & a+c-b
\end{array}  \right),
\end{equation}
with $s_0\geq s_1 \geq s_2 \geq |s_3|$. For the case of $\Sigma_1$ the corresponding quantum operation~\cite{Note1} is $ \hat \mcM=\Sigma_1 \Phi_\text{T}/s_0$~\cite{davalosdivisibility} (\ie{} Pauli) and $\hat \mcT_2=s_0 L_2$; for $\Sigma_2$ we have $\hat \mcM=\Sigma_2\Phi_\text{T}/a$ and $\hat \mcT_2=a L_2$. For both cases we have $\hat \mcT_1=\Phi_\text{T} L^\text{T}_1 \Phi_\text{T}$. Defining $z:=b/a$, $v:=c/a$, $d'=d/a$ and constructing the \Jami{} matrix of $\mcM$, $\tau_\mcM$, a direct evaluation of $\tau_\mcM \geq 0$ (\ie{} the condition for $\mcM$ to be completely positive (CP)) can be done by computing its eigenvalues. This procedure is simple but tedious, as it relies in the analysis of four inequalities, here we sketch only the analysis: Choosing $d'=0$, cases iii.) and iv.) become evident; cases i.) and ii.) arise by observing that $\tau_\mcM \geq 0$ implies $|d'|\leq f(v,z)$, then we can write $d'=x f(v,z)$ with $x\in [-1,1]$. The analysis of the Kraus rank is straightforward by observing the number of non-zero eigenvalues of $\tau_\mcM$ for each case, this finishes the proof.
\end{proof}

\section{Proof of theorem 6}
\label{app:thm6}
\begin{theorem6}[Divisibility in Kraus rank deficient channels]
A qubit channel, $\mcE \in \cptp{}_2$, is divisible in channels with at most Kraus rank $2$ if and only if it is infinitesimally divisible ($\mcE\in\cpDiv$), otherwise, at least one factoring channel has Kraus rank no less than $3$.
\end{theorem6}
\begin{proof}
To prove the theorem we use the fact that a qubit channel is infinitesimally divisible if and only if its Lorentz normal form is infinitesimally divisible too (theorem 17 in Ref.~\cite{cirac}).
Now, to prove the first part, assume that $\mcE$ is an infinitesimally divisible qubit channel. If it has Kraus rank $3$, it always has non-diagonal Lorentz normal form (otherwise it would be indivisible due to theorem 23 in Ref.~\cite{cirac}), non-diagonal forms are always divisible in Kraus rank $2$ channels according to theorem 3 (and theorem 19 in Ref.~\cite{cirac}). 
If it has Kraus rank $4$ and is non-singular, according to theorem 4 in Ref.~\cite{davalosdivisibility}, its Lorentz normal form is a Pauli channel with the form $e^{\hat\mcL}=\diag\left(1,\eta_1,\eta_2,\eta_3\right)$ (in the Pauli basis with $\eta_i>0$, and up to unitary conjugations). Therefore the components $\eta_i$ fulfill the following~\cite{davalosdivisibility},
\begin{equation}
\frac{\eta_i}{\eta_j \eta_k}\geq 1, \ \  i\neq j\neq k.
\end{equation}
Then, we can find three positive numbers $\lambda_i\leq 1 \ \ i=1,2,3$ such that 
\begin{equation}
\frac{\eta_i}{\eta_j \eta_k}=\frac{1}{\lambda^2_i}\geq 1, \ \  i\neq j\neq k.
\end{equation}
Now solving the equations for $\eta_i \ \ i=1,2,3$ we find that $\eta_i=\lambda_j \lambda_k$ with $i\neq j\neq k$, this is,
\begin{multline}
\diag(1,\eta_1,\eta_2,\eta_3)=\diag(1,1,\lambda_1,\lambda_1)\diag(1,\lambda_2,1,\lambda_2)\\
\times \diag(1,\lambda_3,\lambda_3,1).
\end{multline}
This means that any infinitesimally divisible Pauli channel is divisible in a bit, phase and bit-phase flip channels, and all of them have Kraus rank $2$. Therefore any infinitesimally divisible non-singular qubit channel is divisible in Kraus rank $2$ channels.
The non-singular case (also with Kraus rank $4$) is proved in the following way. First observe that according to the proof of theorem 5 in Ref.~\cite{davalosdivisibility}, the only infinitesimally divisible Pauli channels (up to unitary conjugations) have the form $\diag(1,0,0,\lambda)$ with $0\leq \lambda \leq 1$. For this case we can use directly the Lindblad-Boundary decomposition and write $\diag(1,0,0,\lambda)=\diag(1,1,\lambda,\lambda)\diag(1,0,0,1)$, both factoring channels have Kraus rank $2$. Therefore any infinitesimally divisible qubit channel is divisible in Kraus rank $2$ channels.
To prove the second part assume now that $\mcE$ is not infinitesimally divisible. According to theorem 4 of our main text, the minimum Kraus rank that a factoring channel can have is $3$, otherwise the channel would be infinitesimally divisible. This finishes the proof.
\end{proof}
\section{Reduction of quantum computer register size to simulate quantum channels}
\label{app:qc1}
Theorem 6 can be used to simulate channels in quantum computers using smaller registers than the ones established by the Stinespring dilation theorem~\cite{Stinespring2006}. In the case of infinitesimally divisible qubit channels, this reduction is enough to dispense with a qubit. This is, full Kraus rank channels need a four-dimensional ancillary system to be simulated (two qubits), but according to theorem $6$ we can do it using only one qubit for infinitesimally divisible channels (since they can be divided in Kraus rank $2$ channels that just need a two-dimensional ancillary system, \ie{} a qubit). Observe that such reduction is impossible for channels that are not infinitesimally divisible since they need at least a three-dimensional ancilla, so we cannot dispense with a qubit. 

\par
As an example consider the channel 
\begin{equation}
\hat \mcE=\diag(1,0,0,\lambda),
\label{eq:channel_implemented}
\end{equation}
with $0<\lambda<1$ which is full Kraus rank. It can be divided in the following way,
\begin{equation}
\hat \mcE=\underbrace{\left(
\begin{array}{cccc}
 1 & 0 & 0 & 0 \\
 0 & 1 & 0 & 0 \\
 0 & 0 & \lambda  & 0 \\
 0 & 0 & 0 & \lambda  \\
\end{array}
\right)}_{e^{t_{\min} \hat \mcL_{\text{bit flip}}}}
\underbrace{\left(
\begin{array}{cccc}
 1 & 0 & 0 & 0 \\
 0 & 0 & 0 & 0 \\
 0 & 0 & 0 & 0 \\
 0 & 0 & 0 & 1 \\
\end{array}
\right)}_{\hat \mcE_\text{boundary}}.
\label{eq:decomp}
\end{equation}
Each factor has Kraus rank $2$ and can be implemented using a two-qubit register, see figure~\ref{fig:circuit}. After implementing the first factoring channel, the ancillary qubit needs to be reset, thus, the circuit needs a bit to classically control the ancillary bit after it is measured. Depending in which state the ancilla collapsed, the bit controls a bit flip on it. After this control, one applies the second unitary corresponding to the Lindblad part.
The unitaries needed for the example of~\eref{eq:channel_implemented}, corresponding to the decomposition of~\eref{eq:decomp}, are the following,
\begin{align}
&U_{boundary}=\left(
\begin{array}{cccc}
 0 & 0 & 1 & 0 \\
 0 & 1 & 0 & 0 \\
 1 & 0 & 0 & 0 \\
 0 & 0 & 0 & 1 \\
\end{array}
\right),\\
&U_{e^\mcL}=\left(
\begin{array}{cccc}
 0 & \frac{\sqrt{1-\lambda }}{\sqrt{2}} & 0 & -\frac{\sqrt{\lambda +1}}{\sqrt{2}} \\
 \frac{\sqrt{1-\lambda }}{\sqrt{2}} & 0 & -\frac{\sqrt{\lambda +1}}{\sqrt{2}} & 0 \\
 \frac{\sqrt{\lambda +1}}{\sqrt{2}} & 0 & \frac{\sqrt{1-\lambda }}{\sqrt{2}} & 0 \\
 0 & \frac{\sqrt{\lambda +1}}{\sqrt{2}} & 0 & \frac{\sqrt{1-\lambda }}{\sqrt{2}} \\
\end{array}
\right),
\end{align}
this is 
$$\mcE_\text{boundary}[\rho]=\tr_\text{ancilla} \left[U_\text{boundary} (\proj{0}{0}\otimes \rho) U^\dagger{}_\text{boundary} \right],$$
and similarly for $e^\mcL$. In the next section we implement this example in a quantum computer.
\begin{figure}
\centering
\includegraphics[width=\columnwidth]{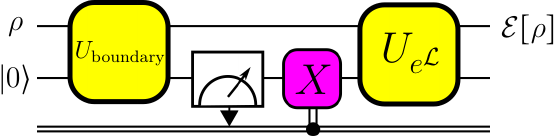}
\caption{Implementation of infinitesimally divisible qubit channels using a two-qubit quantum register. After a measurement in the computational basis is carried on the ancillary qubit, a bit is to prepare the ancillary qubit for the second part of the decomposition. \label{fig:circuit}}
\end{figure}


\begin{widetext}
\section{Implementation in quantum processor}
\label{app:qc2}
In this section we implement the channel given in~\eref{eq:channel_implemented} (with $\lambda=3/4$) in a IBM \textit{falcon r4T} quantum processor~\cite{qiskit}, and provide the \textit{python}~\cite{python} code used to perform quantum process tomography. In particular, the code computes the fidelities between the quantum computed and theoretical Choi matrices, see figure 2 in the main text. 
Unfortunately the falcon r4T processor does not support classically controlled operations, thus, to implement the circuit in~\fref{fig:circuit} we make use of the \textit{deferred} and \textit{implicit} measurement principles~\cite{chuangbook}, see~\fref{fig:circuit_simpler}. This is, we include a third qubit that is used simply as a bit. This is, instead of measuring the ancillary qubit, we use it as a quantum control of the third qubit to ``write'' its state on it, then we use such qubit to control back the ancillary qubit and flip it. Deferred and implicit measurement principles guaranty that the statistics obtained are exactly the same as in circuit of~\fref{fig:circuit}. Since the third qubit is used simply as a bit, the circuit effectively needs only two qubits.
\par
In what follows we show one of the ten experimentally obtained Choi matrices of the channel, and the theoretical one to compare,
\begin{align}
\text{Choi}_\text{experimental}&=\left(
\begin{array}{cccc}
 0.87815 & 0.0104\, -0.03035 i & 0.001725\, -0.040975 i & 0.0119\, -0.011 i \\
 0.0104\, +0.03035 i & 0.12185 & -0.0066-0.0091 i & -0.001725+0.040975 i \\
 0.001725\, +0.040975 i & -0.0066+0.0091 i & 0.1649 & 0.0046\, +0.00495 i \\
 0.0119\, +0.011 i & -0.001725-0.040975 i & 0.0046\, -0.00495 i & 0.8351 \\
\end{array}
\right),\\ 
\text{Choi}_\text{theoretical}&=\left(
\begin{array}{cccc}
 0.875 & 0 & 0 & 0 \\
 0 & 0.125 & 0 & 0 \\
 0 & 0 & 0.125 & 0 \\
 0 & 0 & 0 & 0.875 \\
\end{array}
\right).
\end{align}
\label{sec:a}
\begin{figure}[h!]
\centering
\includegraphics[scale=0.5]{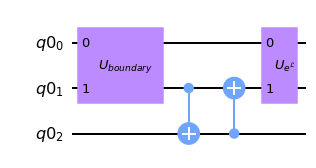}
\caption{Circuit implementing channel of~\eref{eq:channel_implemented} using the Lindblad-Boundary decomposition. The channel is implemented in qubit $q0_0$, while qubit $q0_1$ is used as the ancillary system and $q0_2$ used effectively as a bit. This circuit gives the same statistics in the system and ancillary qubit as of the circuit in~\fref{fig:circuit}. The figure was generated using qiskit. \label{fig:circuit_simpler}}
\end{figure}
{\scriptsize
\begin{lstlisting}[language=Python]e
#Qiskit version: 0.20.1
# Importing libraries for the construction of the quantum circuit
from qiskit import QuantumCircuit, transpile, ClassicalRegister, QuantumRegister, execute
# Importing also numpy, will be needed later
import numpy as np

#Loading account
from qiskit import IBMQ
IBMQ.load_account()
provider = IBMQ.get_provider('ibm-q')

# Tomography function
from qiskit.ignis.verification.tomography import process_tomography_circuits, ProcessTomographyFitter

# job monitor
from qiskit.tools.monitor import job_monitor

#Linear Algebra functions
from numpy import dot, trace
from numpy.linalg import multi_dot
from scipy.linalg import sqrtm

# Defining unitaries needed to implement
U=[[0,0,1,0],
   [0,1,0,0],
   [1,0,0,0],
   [0,0,0,1]
  ]
l=0.75
gm=(1-l)**0.5/(2**0.5)
gp=(1+l)**0.5/(2**0.5)
U2=[[0,gm,0,-gp],
   [gm,0,-gp,0],
   [gp,0,gm,0],
    [0,gp,0,gm]]

# Defining quantum register
q = QuantumRegister(3)
qc=QuantumCircuit(q)
qc.unitary(U,[0,1],'$U_{boundary}$')
qc.cnot([1],[2])
qc.cnot([2],[1])
qc.unitary(U2,[0,1],'$U_{e^{\mathcal{L}}}$')
qc.draw('mpl')

### Quantum process tomography

#Using quantum processor falcon r4T in Lima
backend='ibmq_lima'
quamtum_computer= provider.get_backend(backend)

qpt_circs = process_tomography_circuits(qc,q[0],prepared_qubits=q[0])

# Computing the 12 tomography circuits 20000 each, 
joblist=[]
for i in range(10):
    job = execute(qpt_circs,backend=quamtum_computer, shots=20000)
    print(i)
    job_monitor(job)
    joblist.append(job)

#Getting experimental Choi matrices
chois=list(map(lambda r: ProcessTomographyFitter(r.result(), qpt_circs).fit(),joblist))
#Converting them to numpy array
chois=list(map(np.array,chois))

# Defining the theoretical Choi matrix
theoreticalchoi=np.diag(0.5*np.array([1+l,1-l,1-l,1+l]))

#Computation of fidelities between the theoretical and experimental Choi
fidelities=[]
for k in chois:
    fidelities.append(trace(sqrtm(multi_dot([sqrtm(theoreticalchoi),k,sqrtm(theoreticalchoi)])))/2)

#Print fidelities
fidelities
\end{lstlisting}
}
\end{widetext}
\end{document}